\newtheorem*{theorem*}{Theorem}
\newtheorem{theorem}{\bfseries Theorem}
\newtheorem{definition}{\bfseries Definition}
\newtheorem{corollary}{\bfseries Corollary}
\newtheorem{lemma}{\bfseries Lemma}
\newtheorem{assumption}{\bfseries Assumption}
\providecommand{\mnorm}[1]{\ensuremath{\left\lvert#1\right\rvert}}
\newcommand{\printcomment}[1]{\textcolor{red}{\bf #1}~\\}
\def\R{\mathbb{R}}
\def\C{\mathcal{C}}
\def\G{\mathcal{G}}
\def\F{\mathcal{F}}
\title{Resilience in Collaborative Optimization: \\Redundant and Independent Cost Functions}
\author{Nirupam Gupta \hspace*{.7in} Nitin H. Vaidya\\Georgetown University}
\date{}
\begin{document}

\maketitle

\begin{abstract}
     This report considers the problem of Byzantine fault-tolerance in multi-agent collaborative optimization. In this problem, each agent has a local cost function. The goal of a collaborative optimization algorithm is to compute a minimum of the aggregate of the agents' cost functions. We consider the case when a certain number of agents may be Byzantine faulty. Such faulty agents may not follow a prescribed algorithm, and they may send arbitrary or incorrect information regarding their local cost functions. A reasonable goal in presence of such faulty agents is to minimize the aggregate cost of the non-faulty agents. In this report, we show that this goal can be achieved {\em if and only if} the cost functions of the non-faulty agents have a {\em minimal redundancy} property. We present different algorithms that achieve such tolerance against faulty agents, and demonstrate a trade-off between the complexity of an algorithm and the properties of the agents' cost functions. 
     
     Further, we also consider the case when the cost functions are independent or do not satisfy the {\em minimal redundancy} property. In that case, we quantify the tolerance against faulty agents by introducing a metric called {\em weak resilience}. We present an algorithm that attains weak resilience when the faulty agents are in the minority and the cost functions are non-negative.
\end{abstract}

\section{Introduction} 
\label{sec:intro}
The problem of collaborative optimization in multi-agent systems has gained significant attention in recent years~\cite{boyd2011distributed, nedic2009distributed, duchi2011dual, rabbat2004distributed, raffard2004distributed}. In this problem, each agent knows its own {\em local} objective (or cost) function. In the fault-free setting, all the agents are non-faulty (or honest), and the goal is to design a distributed (or collaborative) algorithm to compute a minimum of the aggregate of their local cost functions. We refer to this problem as {\em collaborative optimization}. Specifically, we consider a system of $n$ agents where each agent $i$ has a local real-valued cost function $f_i(x)$ that maps a point $x$ in  $d$-dimensional real-valued vector space (i.e.~$\R^d$) to a real value. Unless otherwise stated, the cost functions are assumed to be convex\footnote{As noted later in Section~\ref{sec:sum}, some of our results are valid even when the cost functions are {\em non-convex}.}~\cite{boyd2004convex}. The goal of collaborative optimization is to determine a {\em global} minimum $x^*$, such that
\begin{align}
    x^* \in \arg \min_{x \in \R^d} ~ \sum_{i = 1}^n f_i(x) . \label{eqn:orig_obj}
\end{align}
Throughout the report, we use the shorthand `$\min$' for `$\min_{x \in \R^d}$', unless otherwise mentioned.\\


As a simple example, $f_i(x)$ may denote the cost for an agent $i$ (which may be a robot or a person) to travel to location $x$ from its current location. In this case, $x^*$ is a location that minimizes the total cost for all the agents. Such multi-agent collaborative optimization 
is of interest in many practical applications,
 including collaborative machine learning~\cite{bottou2018optimization, boyd2011distributed, kairouz2019advances}, swarm robotics~\cite{raffard2004distributed}, and collaborative sensing~\cite{rabbat2004distributed}. Most of the prior work assumes all the agents to be non-faulty. Non-faulty agents follow a specified algorithm correctly. In our work we consider a scenario wherein some of the agents may be faulty and may behave incorrectly. \\

Su and Vaidya \cite{su2016fault} introduced the problem of
collaborative optimization in the presence of a Byzantine faulty agents. A Byzantine faulty agent may behave arbitrarily~\cite{lamport1982byzantine}. In particular, the faulty agents
may send incorrect and inconsistent information in order to bias the output of a collaborative optimization algorithm, and the faulty agents may also collaborate with each other. For example, consider an application
of multi-agent collaborative optimization to the case of collaborative sensing where the agents (or {\em sensors}) are observing a common {\em object} in order to collectively identify the object. However, the faulty agents may send arbitrary observations concocted to prevent the non-faulty agents from making the correct identification~\cite{chen2018resilient, chong2015observability, pajic2014robustness}. Similarly, in the case of collaborative learning, which is another application of multi-agent collaborative optimization, the faulty agents may send incorrect information based on {\em mislabelled} or arbitrary concocted data points to prevent the non-faulty agents from learning a {\em good} classifier~\cite{alistarh2018byzantine, bernstein2018signsgd, blanchard2017machine, charikar2017learning, chen2017distributed, xie2018generalized}. 

\subsection{System architecture}
\label{sub:arch}

The contributions of this paper apply to two different system architectures illustrated in Figure \ref{fig:sys}. In the server-based architecture, the server is assumed to be trustworthy, but up to $t$ agents may be Byzantine faulty. The trusted server helps solve the distributed optimization problem in coordination with the agents. In the peer-to-peer architecture, the agents are connected to each other by a complete network, and up to $t$ of these agents may be Byzantine faulty.
Provided that $t<\frac{n}{3}$, any algorithm for the server-based architecture can be simulated in the peer-to-peer system using
the well-known {\em Byzantine broadcast} primitive \cite{lynch1996distributed}.\\

For the simplicity of presentation, the rest of this report assumes the server-based architecture.\\

\begin{figure}[htb!]
\centering
\centering \includegraphics[width=0.6\textwidth]{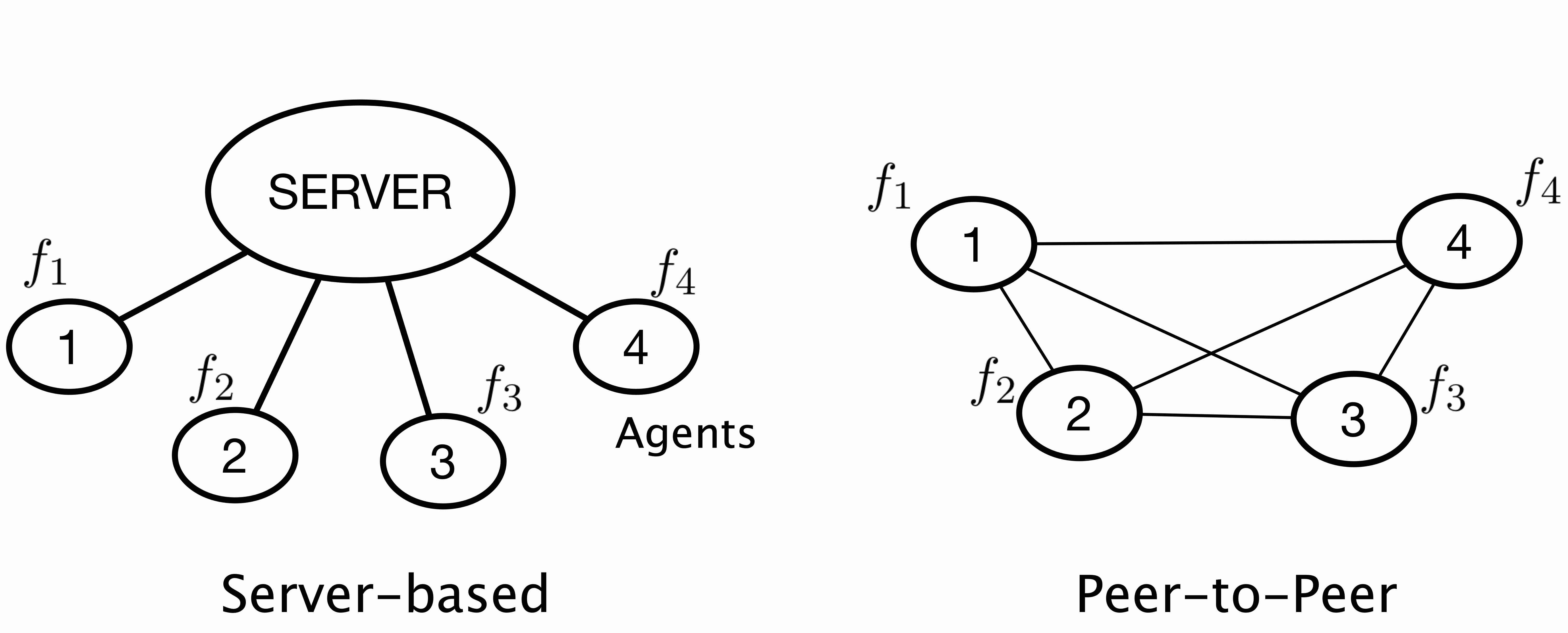} 
\caption{The system architectures.}
\label{fig:sys}
\end{figure}

\subsection{Resilience in collaborative optimization}
\label{sub:res}

As stated above, we will assume the server-based architecture in the rest of our discussion.
We assume that up to $t$ of the $n$ agents may be Byzantine faulty, such that $n>2t$.\\

We assume that each agent $i$ has a ``true'' cost function. Unless otherwise noted, each such cost function is assumed to be convex.
\begin{itemize}
\item If an agent $i$ is non-faulty, then its behavior is consistent with its true cost function, say $g_i(x)$. For instance,
if agent $i$ is required to send to the server the value of its cost function at some point $x^\dagger$, then a non-faulty agent $i$ 
will indeed send $g_i(x^\dagger)$. 

\item If an agent $i$ is faulty, then its behavior can be arbitrary, and not necessarily
consistent with its true cost function, say $f_i(x)$.
For instance, if agent $i$ is required to to send to the server the value of its cost function at some point $x^\dagger$, then a faulty agent $i$ may send an arbitrary value instead of $f_i(x^\dagger)$.

Clearly, when an agent is faulty, it may not share with the server correct information about its true cost function.
However, it is convenient to define its true cost function as above, which is the cost function it would use in the absence of its failure.

\end{itemize}

Throughout this report, we assume the existence of a finite minimum for the aggregate of the true cost functions of the agents. Otherwise, the objective of collaborative optimization is vacuous. Specifically, we make following technical assumption.

\begin{assumption}
\label{asp:basic}
Suppose that the true cost function of each agent $i$ is $f_i(x)$. Then, for every non-empty set of agents $T$, we assume that there exists a finite $x^*$ such that $x^* \in \arg\min_{x \in \R^d} \, \sum_{i \in T} f_i(x)$.
\end{assumption}

Suppose that the true cost function of agent $i$ is $f_i(x)$.
Then, ideally, the goal of collaborative optimization is to compute
a minimum of the aggregate of the {\em true} cost functions of all the
$n$ agents, $\sum_{i = 1}^n f_i(x)$, even if some of the agents are Byzantine faulty. In general, this may not feasible since the Byzantine faulty agents can behave arbitrarily. To understand the feasibility of achieving some degree of resilience to Byzantine faults, we consider two cases.

\begin{itemize}
\item {\em Independent functions:} 
A set of cost functions are \textit{independent} if information about some of the functions in the set does not help learn any information about the remaining functions in the set. In other words, the cost functions do not contain any redundancy.
\item {\em Redundant functions:}
Intuitively speaking, a set of cost functions includes \textit{redundancy} when knowing some of the cost functions helps to learn some information about the remaining cost functions. 
As a trivial example, consider the special case when it is known that there exists some function $g(x)$  such that $g(x)$ is the true cost function of every agent.
In this case, knowing the true cost function of any agent suffices to learn the true cost functions of all the agents. Also,
any $x$ value that minimize an individual agent's true cost function also minimizes the total true cost over all the agents.
%
\end{itemize}

%


Su and Vaidya \cite{su2016fault} defined the goal of { fault-tolerant} collaborative optimization as minimizing the 
aggregate of cost functions of just the non-faulty agents. Specifically, if $f_i(x)$ is the true cost function of agent $i$, and $S$ denotes the set of non-faulty agents in a given execution,
then they defined the goal of fault-tolerant optimization to be to output a point in
\begin{align}
    \arg \min_{x \in \R^d} ~ \sum_{i \in S} f_i(x). \label{eqn:hon_obj}
\end{align}
We refer to the above goal as \textit{$t$-resilience}, formally defined below.\\

\noindent \fbox{\begin{minipage}{0.98\textwidth}
\begin{definition}[{\em $t$-resilience}]
\label{def:t_res}
A collaborative optimization algorithm is said to be \textit{$t$-resilient} if it outputs a minimum of the aggregate of the true cost functions of the \textit{non-faulty} agents despite up to $t$ agents being Byzantine faulty. 
\end{definition}
\end{minipage}}
\\~\\


In general, Su and Vaidya \cite{su2016fault} showed that, because the identity of the faulty agents is a priori unknown, a {$t$-resilient} algorithm may not necessarily exist. 
%
%
%
In this report, we provide an exact characterization of the condition
under which {$t$-resilience} is achievable. In particular, we show that $t$-resilience is achievable {if and only if}
the agents satisfy a property named {$2t$-redundancy}, defined next.\footnote{\label{footnote:k} The notion of $2t$-redundancy can be extended to $k$-redundancy by replacing $n-2t$ in Definitions \ref{def:2t_red_alt} and \ref{def:2t_red} by $n-k$.} The definitions below are vacuous if $n \leq 2t$. Henceforth, we assume that the maximum number of faulty agents $t$ are in the minority, i.e., $n > 2t$.

\begin{definition}[{\em $2t$-redundancy}]
\label{def:2t_red_alt}
Let $f_i(x)$ denote the true cost function of agent $i$.
The $n$ agents are said to satisfy \mbox{ $2t$-redundancy} if the following holds for every two subsets
$S_1$ and $S_2$ each containing $n-2t$ agents.
\begin{align}
    \emptyset~\neq~\bigcap_{i \in S_1} \arg \min_{x \in \R^d} f_i(x)~=~ \bigcap_{i \in S_2} \arg \min_{x \in \R^d} f_i(x)\label{def_1}
\end{align}
\end{definition}
The above definition of $2t$-redundancy is equivalent to the definition below, as shown in Appendix~\ref{app:equiv_def}.

\begin{definition}[{\em $2t$-redundancy}]
\label{def:2t_red}
Let $f_i(x)$ denote the true cost function of agent $i$.
The $n$ agents are said to satisfy \mbox{ $2t$-redundancy} if the following holds for any sets of agents $\widehat{S}$ and $S$ such that $\mnorm{S} \geq n-t$, $\mnorm{\widehat{S}} \geq n-2t$, and $\widehat{S} \subseteq S$.
\begin{align}
    \bigcap_{i \in \widehat{S}} \arg \min_{x \in \R^d} f_i(x) = \arg \min_{x \in \R^d} \, \sum_{i \in S}  f_i(x) \label{def_2}
\end{align}
\end{definition}


Note that the $t$-resilience property pertains the point in $\R^d$ that is the output of a collaborative optimization
algorithm. $t$-resilience property does not explicitly
impose any constraints on the {\em function value}.
The notion of {\em $(u, \,t)$-weak resilience} stated below relates to function values.\\

\noindent \fbox{\begin{minipage}{0.98\textwidth}
\begin{definition} [{\em $(u, t)$-weak resilience}]
\label{def:weak_res}
Let $f_i(x)$ denote the true cost function of agent $i$.
Let $S$ denote the set of all non-faulty agents. For $0 \leq u \leq \mnorm{S}$, a collaborative optimization algorithm is said to be {\em $(u, t)$-weak resilient} if it  outputs a point $\widehat{x}$ for which there exists a subset $\widehat{S}$ of $S$ such that $\mnorm{\widehat{S}} \geq \mnorm{S} - u$, and
\begin{align}
    \sum_{i \in \widehat{S}}f_i(\widehat{x}) \leq \min_{x \in \R^d} ~ \sum_{i \in S} f_i(x) ~ . \label{eqn:weak_res}
\end{align}
\end{definition}
\end{minipage}}
\\~\\

It can be shown easily that {$(0, t)$-weak resilience} implies { $t$-resilience}. The proof is deferred to Section~\ref{sec:ind}.
In many applications of multi-agent collaborative optimization, such as distributed machine learning, distributed sensing or hypothesis testing and swarm robotics, the cost functions are non-negative~\cite{bottou2018optimization, boyd2011distributed, kairouz2019advances, rabbat2004distributed, raffard2004distributed}. We constructively show that if the true cost functions of the agents are non-negative then {\em $(u, t)$-weak resilience} for $u \geq t$ can be achieved even if the cost functions are independent.

\subsection{Prior Work}
The prior work on resilience in collaborative multi-agent optimization by Su and Vaidya, 2016~\cite{su2016fault}, and Sundaram and Gharesifard, 2018~\cite{sundaram2018distributed}, only consider the special class of {\em univariate} cost functions, i.e, dimension $d$ equals one. On the other hand, we consider the general class of {\em multivariate} cost functions, i.e., $d$ can be greater than one. Specifically, they have proposed algorithms that output a minimum of the {\em non-uniformly} {weighted} aggregate of the non-faulty agents' cost functions when $d = 1$. However, their proposed algorithms do not extend easily for the case when $d > 1$. On the other hand, the algorithms and the fault-tolerance results presented in this report are valid regardless of the value of the dimension $d$ as long as it is finite.\\

Su and Vaidya have also considered a special case where the true cost functions of the agents are convex combinations of a finite number of basis convex functions in~\cite{su2016robust}. They have shown that if the basis functions have a common minimum then a minimum point (as in \eqref{eqn:hon_obj}) can be computed accurately. This property of redundancy in the minimum of the basis functions, we note, is a special case of the {\em $2t$-redundancy} property that we prove necessary and sufficient for {\em $t$-resilience} in this report. Other prior work related to the {\em 2t-redundancy} property is discussed in Section~\ref{sub:red_prior}.\\

Yang and Bajwa, 2017~\cite{yang2017byrdie} consider a very special case of collaborative optimization problem. They assume that the {\em multivariate} cost functions that can be {\em split} into independent {\em univariate} {\em strictly convex} functions. For this special, they have extended the fault-tolerance algorithm of Su and Vaidya, 2016~\cite{su2016fault} for approximate resilience. In general, however, the agents' cost functions do not satisfy such specific properties. In this report, we do not make such assumptions about the agents' cost functions. We only assume the cost functions to be convex, differentiable and that the minimum of their sum is finite (i.e., Assumption~\ref{asp:basic}). Note that these assumptions are fairly standard in the optimization literature, and are also assumed in all of the aforementioned prior work. \\


{\bf Outline of the report:} The rest of the report is organized as follows. In Section~\ref{sec:thm_red}, we present the case when the cost functions have redundancy. In Section~\ref{sec:ind}, we present the case when the cost functions are independent. In Section~\ref{sec:norm}, we summarize a gradient-based algorithm for {\em $t$-resilience}, which was proposed in our prior work~\cite{gupta2019byzantine}. In Section~\ref{sec:sum}, we discuss direct extension of our results to the case when the cost functions are {\em non-differentiable} and {\em non-convex}. In the same section, we also present a summary of our results.

\section{The Case of Redundant Cost Functions}
\label{sec:thm_red}

This section presents the key result of this report for the case when the cost functions are redundant. Unless otherwise mentioned, in the rest of the report, the cost functions are assumed to be differentiable, i.e., their gradients exist at all the points in $\R^d$. Indeed, the cost functions are differentiable for most aforementioned applications of collaborative optimization~\cite{bottou2018optimization, boyd2011distributed, rabbat2004distributed, raffard2004distributed}. Nevertheless, as elaborated in Section~\ref{sec:sum}, some of our results are also applicable for non-differentiable cost functions.\\




Before we present Theorem~\ref{thm:imp} below which states the key result of this section, in Lemma \ref{lem:equiv_red} we present an alternate, and perhaps more natural, equivalent condition of the {\em $2t$-redundancy} property for the specific case when the agents' cost functions are differentiable.
The proof of Lemma \ref{lem:equiv_red} uses Lemma \ref{lem:non-empty-x} stated below.

\begin{lemma}
\label{lem:non-empty-x}
Suppose that Assumption~\ref{asp:basic} holds true, and $n > 2t$. For a non-empty set $T$, consider  a set of functions $g_i(x)$, $i\in T$, such that $$\bigcap_{i\in T}\arg\min_x g_i(x)\neq \emptyset.$$ Then
$$\bigcap_{i\in T}\arg\min_x g_i(x) = \arg\min_x \sum_{i\in T}g_i(x).$$
\end{lemma}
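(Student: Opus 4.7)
The plan is to prove the two set-inclusions separately, using the non-emptiness hypothesis to pin down the minimum value attained by each $g_i$.

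First I would handle the forward inclusion $\bigcap_{i\in T}\arg\min_x g_i(x) \subseteq \arg\min_x \sum_{i\in T}g_i(x)$. This is the easy direction: any $x^*$ that simultaneously minimizes every $g_i$ satisfies $g_i(x^*) \le g_i(y)$ for all $y \in \R^d$ and all $i \in T$. Summing these pointwise inequalities over $i \in T$ immediately shows that $x^*$ also minimizes $\sum_{i\in T} g_i(x)$. Note that Assumption~\ref{asp:basic} guarantees the right-hand $\arg\min$ is nonempty, so the statement is not vacuous.

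The reverse inclusion $\arg\min_x \sum_{i\in T}g_i(x) \subseteq \bigcap_{i\in T}\arg\min_x g_i(x)$ is the main content. Here I would pick an arbitrary $y^* \in \arg\min_x \sum_{i\in T} g_i(x)$ and an $x^* \in \bigcap_{i\in T}\arg\min_x g_i(x)$, which exists by hypothesis. For each $i \in T$, since $x^*$ minimizes $g_i$ individually, $g_i(x^*) \le g_i(y^*)$. Summing over $T$ gives
\[
\sum_{i\in T} g_i(x^*) \;\le\; \sum_{i\in T} g_i(y^*).
\]
On the other hand, $y^*$ minimizes the sum, so the reverse inequality also holds, forcing equality in the summed inequality. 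The key step is then to observe that a sum of non-negative quantities $g_i(y^*) - g_i(x^*) \ge 0$ equals zero only when each term is zero, so $g_i(y^*) = g_i(x^*) = \min_x g_i(x)$ for every $i \in T$. This places $y^*$ in $\arg\min_x g_i(x)$ for each $i$, hence in the intersection, completing the inclusion.

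I do not anticipate a real obstacle: the argument uses only the non-emptiness of the intersection and the standard trick that equality in a sum of nonnegative gaps forces termwise equality. No convexity or differentiability of the $g_i$ is actually needed for this lemma; Assumption~\ref{asp:basic} is invoked only to ensure the aggregate argmin makes sense. The hypothesis $n > 2t$ plays no direct role here and will matter only in the lemmas that consume this result.
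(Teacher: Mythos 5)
Your proof is correct and follows essentially the same route as the paper's: the easy inclusion by summing pointwise inequalities, and the reverse inclusion by comparing an arbitrary minimizer of the sum against a common minimizer of the individual functions and forcing termwise equality of the nonnegative gaps (the paper phrases this as a proof by contradiction, but the argument is the same). Your side remarks — that convexity, differentiability, and $n>2t$ are not needed here — are also consistent with the paper, whose appendix restatement of the lemma drops those hypotheses.
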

Appendix \ref{append:non-empty-x} presents the proof of the above lemma.

\begin{lemma}
\label{lem:equiv_red}
Suppose that Assumption~\ref{asp:basic} holds true, and $n> 2t$. When the true cost functions of the agents are convex and differentiable then the {\em $2t$-redundancy} property stated in Definition~\ref{def:2t_red_alt} or Definition~\ref{def:2t_red} is equivalent to the following condition:
\begin{itemize}
    \item[] A point is a minimum of the sum of true cost functions of the non-faulty agents {\em if and only if} that point is a minimum of the sum of the true cost functions of any $n-2t$ non-faulty agents.
\end{itemize}
\end{lemma}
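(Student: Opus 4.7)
The idea is to prove both directions using Lemma~\ref{lem:non-empty-x} as a bridge between the intersection-of-individual-argmins form of $2t$-redundancy (Definitions~\ref{def:2t_red_alt}, \ref{def:2t_red}) and the argmin-of-sums form of the claimed condition; the substance lies in the reverse direction.

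\textbf{Forward direction.} Assume Definition~\ref{def:2t_red}. For any non-faulty set $S$ with $\mnorm{S}\geq n-t$ and any $\widehat{S}\subseteq S$ with $\mnorm{\widehat{S}}=n-2t$, Definition~\ref{def:2t_red} gives
\[
\bigcap_{i\in\widehat{S}}\arg\min_{x}f_i(x)\;=\;\arg\min_{x}\sum_{i\in S}f_i(x),
\]
whose right-hand side is non-empty by Assumption~\ref{asp:basic}. Lemma~\ref{lem:non-empty-x} applied with $T=\widehat{S}$ then rewrites the left-hand intersection as $\arg\min_x\sum_{i\in\widehat{S}}f_i(x)$, which is exactly the condition.

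\textbf{Reverse direction.} Assume the condition; I will derive Definition~\ref{def:2t_red_alt}. Since the condition must hold for every possible non-faulty set, in particular it applies when all $n$ agents are non-faulty. Setting $X^*:=\arg\min_x\sum_{i=1}^n f_i(x)$ (non-empty by Assumption~\ref{asp:basic}), the condition yields $\arg\min_x\sum_{i\in\widehat{S}}f_i(x)=X^*$ for every $\widehat{S}\subseteq\{1,\dots,n\}$ with $\mnorm{\widehat{S}}=n-2t$. The main obstacle is lifting this equality of \emph{sum}-minima to the intersection form of Definition~\ref{def:2t_red_alt}, which requires a single $x^*$ to individually minimize every $f_i$; this is where differentiability enters.

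\textbf{Swap argument.} Fix $x^*\in X^*$. For any distinct agents $j,k$, pick $\widehat{S}_1$ of size $n-2t$ with $j\in\widehat{S}_1$ and $k\notin\widehat{S}_1$ (feasible for $t\geq 1$; the $t=0$ case is degenerate), and set $\widehat{S}_2=(\widehat{S}_1\setminus\{j\})\cup\{k\}$. Both $\sum_{i\in\widehat{S}_1}f_i$ and $\sum_{i\in\widehat{S}_2}f_i$ are minimized at $x^*$, so by differentiability their gradients vanish there; subtracting gives $\nabla f_j(x^*)=\nabla f_k(x^*)$. Since $j,k$ are arbitrary, all $\nabla f_i(x^*)$ coincide, and plugging back into $\nabla\sum_{i\in\widehat{S}_1}f_i(x^*)=(n-2t)\nabla f_j(x^*)=0$ forces this common value to be $0$. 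Convexity then yields $x^*\in\arg\min_x f_i(x)$ for every $i$. Finally, for any $S_1,S_2$ with $\mnorm{S_1}=\mnorm{S_2}=n-2t$, the intersection $\bigcap_{i\in S_\ell}\arg\min_x f_i(x)$ contains $x^*$, is therefore non-empty, and by Lemma~\ref{lem:non-empty-x} equals $\arg\min_x\sum_{i\in S_\ell}f_i(x)=X^*$, which establishes Definition~\ref{def:2t_red_alt}.
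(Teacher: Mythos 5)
Your proof is correct and follows essentially the same route as the paper's: the direction from Definition~\ref{def:2t_red} to the stated condition is handled via Lemma~\ref{lem:non-empty-x} exactly as in the paper's Part~II, and your ``swap argument'' (replacing agent $j$ by $k$ in a size-$(n-2t)$ subset to equate gradients at $x^*$ and conclude they all vanish) is the same device the paper uses in Part~I with the sets $S_i, S_j$ satisfying $S_i\setminus\{i\}=S_j\setminus\{j\}$. Your write-up is in fact slightly more explicit than the paper's on two points worth keeping: the step $(n-2t)\nabla f_j(x^*)=\mathbf{0}$ forcing the common gradient to be zero, and the caveat that choosing $\widehat{S}_1$ with $j\in\widehat{S}_1$, $k\notin\widehat{S}_1$ requires $t\geq 1$.
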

\begin{proof}
Let the true cost function of each agent $i$ be denoted by $f_i(x)$. Recall that there can be at most $t$ Byzantine faulty agents. Let $S$ with $\mnorm{S} \geq n-t$ be the set of the non-faulty agents.
~\\

{\bf Part I:} We first show that the condition stated in the lemma implies that in Definition~\ref{def:2t_red_alt}. Recall that the conditions in Definitions~\ref{def:2t_red_alt} and~\ref{def:2t_red} are equivalent.\\

The condition stated in the lemma is equivalent to saying that for every subset $\widehat{S}$ of $S$ of size $n-2t$,
\begin{align}
    \arg \min ~ \sum_{i \in \widehat{S}} f_i(x) = \arg \min ~ \sum_{i \in S} f_i(x). \label{eqn:sum_sum}
\end{align}
We show below that~\eqref{eqn:sum_sum} together with Assumption~\ref{asp:basic} imply that for every subset $\widehat{S}$ of $S$ of size $n-2t$, 
\begin{align}
    \bigcap_{i \in \widehat{S}} \arg \min f_i(x) \neq \emptyset. \label{eqn:imply_sum_sum}
\end{align}

Consider two arbitrary agents $i, ~j$ in $S$, and then consider two size $(n-2t)$ subsets $S_i$ and $S_j$ of $S$ such that 
$i\in S_i$, $j\in S_j$, and
\begin{align}S_i \setminus \{i\} = S_j \setminus \{j \}. \label{eq_ij}\end{align}
By Assumption~\ref{asp:basic}, there exists a point $x^* \in \arg \min \sum_{i \in S}f_i(x)$. Now,~\eqref{eqn:sum_sum} implies that
\begin{align*}
    \nabla ~ \sum_{l \in S_i} f_l (x^*) = \nabla ~ \sum_{l \in S_j} f_l (x^*)  = 0.
\end{align*}
The above equality and \eqref{eq_ij} imply that
\[ \nabla f_i(x^*) = \nabla f_j(x^*)\]
This equality can be proven for any $i,j\in S$.
As the true cost functions $f_1, \ldots, ~ f_n$ are assumed convex, from above we obtain,
\begin{align*}
    x^* \in \arg \min_x f_i(x), \quad \forall \, i \in S.
\end{align*}
Therefore, for every subset $\widehat{S}$ of $S$ of size $n-2t$,
\[x^* \in \bigcap_{i \in \widehat{S}} \arg \min_x f_i(x) \neq \emptyset.\]
The above implies that for every subset $\widehat{S}$ of $S$ of size $n-2t$,
\[\arg \min ~ \sum_{i \in \widehat{S}} f_i(x) =  \bigcap_{i \in \widehat{S}} \arg \min f_i(x).\]
The above together with~\eqref{eqn:sum_sum} implies the condition in Definition~\ref{def:2t_red_alt}, i.e., 
\[\bigcap_{i \in S_1} \arg \min f_i(x) = \bigcap_{i \in S_2} \arg \min f_i(x), \quad \forall \, S_1, ~ S_2 \subset S, ~ \mnorm{S_1} = \mnorm{S_2} = n-2t.\]
~\\

{\bf Part II:}
We now show that the condition in Definition \ref{def:2t_red} implies the condition stated in the lemma. Now, $\arg \min \sum_{i \in S} f_i(x)$ (i.e., the right side of (\ref{def_2})) is a non-empty set due to Assumption \ref{asp:basic}. This and (\ref{def_2}) imply that for every subset $\widehat{S} \subset S$ of size $n-2t$, 
\[\bigcap_{i \in \widehat{S}} \arg \min f_i(x) \neq \emptyset.\]
Therefore, by Lemma \ref{lem:non-empty-x},
\[\bigcap_{i \in \widehat{S}} \arg \min f_i(x) = \arg \min ~ \sum_{i \in \widehat{S}} f_i(x).\]
Substituting the above in~\eqref{def_2} implies~\eqref{eqn:sum_sum} which is equivalent to the condition stated in the lemma. 
\end{proof}
~

The following theorem presents the main result of this section.

\def\out{\mathsf{Out}}
\def\d{\partial}
\begin{theorem}
\label{thm:imp}
Suppose that Assumption~\ref{asp:basic} holds true, and $n > 2t$.
When the true cost functions of the agents are convex and differentiable then {\em $t$-resilience} can be achieved if and only if the agents satisfy the \mbox{\em $2t$-redundancy} property.
\end{theorem}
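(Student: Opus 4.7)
The plan is to separate the biconditional into the two standard implications, handling the sufficiency by constructing an explicit $t$-resilient algorithm and the necessity by an indistinguishability argument combined with the first-order optimality conditions guaranteed by convexity and differentiability.

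For the sufficiency direction, assume $2t$-redundancy and let $r_i$ denote the cost function reported by agent $i$, so that $r_i=f_i$ whenever $i$ is non-faulty. My algorithm has the server enumerate every $(n-t)$-subset $P\subseteq[n]$, compute $X_Q := \arg\min_x \sum_{i \in Q} r_i(x)$ for every $(n-2t)$-subset $Q\subseteq P$, and call $P$ \emph{certified} if all these $X_Q$ coincide with a single non-empty set; in that case it returns any element of this set. The verification splits into two parts. First, any $(n-t)$-subset $P^\star\subseteq S$ is certified, because on $P^\star$ the reports equal the true cost functions and Lemma~\ref{lem:equiv_red} combined with Assumption~\ref{asp:basic} forces every $X_Q$ to equal the non-empty set $\arg\min \sum_{i \in S} f_i$. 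Second, any certified $P$ yields a correct output: since $|P|=n-t>t$, there is some $(n-2t)$-subset $Q_0\subseteq P\cap S$ of non-faulty agents, Lemma~\ref{lem:equiv_red} gives $X_{Q_0}=\arg\min \sum_{i \in S} f_i$, and certification forces the returned common set to equal this answer.

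For the necessity direction, suppose an algorithm $A$ is $t$-resilient and consider the \emph{canonical} execution in which every agent reports its true $f_i$. For each $F\subseteq[n]$ with $|F|\le t$, the transcript is the same whether or not the agents in $F$ are taken to be the faulty ones (acting honestly), so the single output $x^* := A(f_1,\ldots,f_n)$ must lie in $\arg\min \sum_{i \in [n]\setminus F} f_i$ for every such $F$. Differentiability turns this into $\sum_{i \in T} \nabla f_i(x^*)=0$ for every $(n-t)$-subset $T$; subtracting this identity for $T$ and for $(T\setminus\{k\})\cup\{j\}$ yields $\nabla f_k(x^*)=\nabla f_j(x^*)$ for every $k,j\in[n]$, after which substituting back forces $\nabla f_i(x^*)=0$ for every $i$. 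Convexity then gives $x^*\in\bigcap_{i\in[n]}\arg\min f_i$, so this intersection is non-empty.

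The main obstacle is upgrading this non-emptiness into the full equality required by Definition~\ref{def:2t_red_alt} (equivalently, Definition~\ref{def:2t_red}). My plan is to apply Lemma~\ref{lem:non-empty-x} alongside the preceding step: because $x^*\in\arg\min f_i$ for every $i$, every intersection $\bigcap_{i\in\widehat{S}}\arg\min f_i$ is non-empty, so Lemma~\ref{lem:non-empty-x} identifies it with $\arg\min \sum_{i\in\widehat{S}} f_i$, and likewise $\bigcap_{i\in S}\arg\min f_i=\arg\min \sum_{i\in S} f_i$ for every $(n-t)$-subset $S\supseteq\widehat{S}$. One containment between these two sets is automatic from $\widehat{S}\subseteq S$; for the reverse I plan to re-invoke the canonical-execution argument on a carefully perturbed tuple that agrees with $(f_i)$ on $\widehat{S}$ but replaces the remaining functions with ones whose unique minimiser is a prescribed $y\in\bigcap_{i\in\widehat{S}}\arg\min f_i$, thereby forcing the $t$-resilient algorithm's output to equal $y$ and the first-order chain of equalities to propagate $\nabla f_i(y)=0$ to every $i\in[n]$. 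This establishes $\arg\min \sum_{i\in\widehat{S}} f_i = \arg\min \sum_{i\in S} f_i$, which is precisely the condition supplied by Lemma~\ref{lem:equiv_red}, so $2t$-redundancy holds.
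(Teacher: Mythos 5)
Your sufficiency argument and the first half of your necessity argument match the paper's proof in substance: the certification procedure you describe is essentially the paper's $t$-resilient algorithm of Section~\ref{sub:t_algo}, and your canonical-execution argument is the paper's comparison of executions $E_0$ and $E_i$ (specialised there to singleton fault sets), which yields $\nabla g_i(x_\Pi)=\mathbf{0}$ for every $i$ and hence, via Lemma~\ref{lem:non-empty-x}, the non-emptiness and the identity $\arg\min\sum_{i\in T}g_i=\bigcap_{i\in T}\arg\min g_i$ for every $T$.

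The gap is in your final step. Your perturbed tuple agrees with $(f_i)$ only on $\widehat{S}$, so it replaces the functions of all $2t$ agents in $[n]\setminus\widehat{S}$. This breaks the reduction in two ways. First, there is no admissible execution in which the reports equal the perturbed tuple while at least $n-t$ agents' true cost functions are the original $f_i$'s: the agents whose reports would differ from their true functions number $2t>t$. Consequently you cannot force the output on that transcript to lie in $\arg\min\sum_{i\in S}f_i$, which is exactly the containment you are trying to establish. Second, even granting that the output on the perturbed tuple is $y$, the first-order bookkeeping only gives $\nabla\tilde f_j(y)=0$ where $\tilde f_j=h_j$ for $j\notin\widehat{S}$; the original $f_j$ for those agents appear nowhere in the perturbed tuple, so the claimed propagation of $\nabla f_i(y)=0$ to every $i\in[n]$ does not follow. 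The paper's fix is a three-set construction: fix an $(n-t)$-set $S\supseteq\widehat{S}$, put $C=[n]\setminus S$ and $F=S\setminus\widehat{S}$ (each of size exactly $t$), and replace only the $t$ reports of $C$ by convex functions uniquely minimised at a given $y\in\bigcap_{i\in\widehat{S}}\arg\min f_i$, keeping the reports of all of $S$ equal to $f_i$. In the execution where $C$ is faulty, $t$-resilience forces the output into $\arg\min\sum_{i\in S}f_i$; in the indistinguishable execution where $F$ is faulty, the non-faulty aggregate $\sum_{i\in\widehat{S}}f_i+\sum_{j\in C}h_j$ has $\{y\}$ as its unique minimiser by Lemma~\ref{lem:non-empty-x}, forcing the output to be $y$. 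Hence $y\in\arg\min\sum_{i\in S}f_i=\bigcap_{i\in S}\arg\min f_i$, which is the reverse containment you need; nothing is, or needs to be, concluded about $\nabla f_j(y)$ for $j\in C$.
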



\begin{proof}
The case of $t$=0 is trivial, since there are no faulty agents.
In the rest of the proof, we assume that $t\geq 1$. \\

{\bf Sufficiency of $2t$-redundancy:} Sufficiency of $2t$-redundancy is proved constructively using the
algorithm presented in Section \ref{sub:t_algo}. In particular, the algorithm
is proved to achieve $t$-resilience if $2t$-redundancy holds.\\

{\bf Necessity of $2t$-redundancy:} 
We consider the worst-case scenario where $t$ arbitrary agents are faulty. Suppose that $t$-resilience can be achieved using an algorithm named $\Pi$. Consider an execution $E_S$ of $\Pi$ in which set $S$ with $\mnorm{S} = n-t$ is the actual set of non-faulty agents. All the remaining agents in the set $C = \{1, \ldots, \, n\} \setminus S$ are the actual faulty agents. Suppose that the true cost function of agent $i$ in execution $E_S$ is $g_i(x)$. We assume that the functions $g_1, \ldots, \, g_n$ are differentiable and convex. \\

In any $t$-resilient algorithm for collaborative optimization, the server can communicate with the agents and learn some information about their local cost functions. The most information the server can learn about the cost function of an agent $i$ is the complete description of its local cost function. To prove the necessity of $2t$-redundancy, we assume that the server knows a cost function reported by each non-faulty agent $i$.

Now consider the following executions.
\begin{itemize}

\item In execution $E_0$, all the agents are non-faulty. Let $S_0$ denote the set of all agents, which happen to be non-faulty in execution $E_0$. Thus, $S_0=\{1,2,\cdots,n\}$. The true cost function of agent $i$ is $g_i(x)$, identical to its
true cost function in execution $E_S$.


\item In execution $E_i$, where $1\leq i\leq n$, agent $i$ is Byzantine faulty, and all the remaining $n-1$ agents are non-faulty. Let $S_i=S_0\setminus \{i\}$ denote the set of agents that happen to be non-faulty in execution $E_i$.
In execution $E_i$, the true cost function of each non-faulty agent $i$ is $g_i(x)$, which is identical to its true cost function
in execution $E_S$. 
Let the true cost function of
faulty agent $i$ in execution $E_i$ be a differentiable and convex function $h_i(x)$.
Assume that the functions $g_j(x)$, $\forall j$, and $h_i(x)$ are independent.
In  execution $E_i$, suppose that the {\em behavior} of faulty agent $i$ from the viewpoint of the server is consistent with the cost function  $g_i(x)$ (which equals the true cost function of agent $i$ in execution $E_0$).

\end{itemize}

Fix a particular $i$, $1\leq i\leq n$.
From the viewpoint of the server, execution $E_0$ and execution $E_i$ are indistinguishable.
Thus, the $t$-resilient algorithm $\Pi$ will produce an identical output in these executions; suppose that this output is $x_\Pi$.
%
%
%
As $\Pi$ is assumed to be {\em $t$-resilient}, we have by Definition \ref{def:t_res} and Assumption~\ref{asp:basic},
\begin{align}
\begin{split}
    x_{\Pi} & \in \arg \min \sum_{j \in S_0} g_j(x), \quad \text{ and } \label{eqn:out_pi}\\
    x_{\Pi} & \in \arg \min \sum_{j \in S_i} g_j(x)
\end{split}
\end{align}
For a differentiable cost function $g: \R^d \to \R$, we denote its gradient at a point $x$ by $\nabla g(x)$. Let $\bf{0}$ denote the zero-vector of dimension $d$. If $x^* \in \arg \min_x g(x)$ then
\begin{align}
    \nabla g(x^*) = {\bf 0}. \label{eqn:fact_convex}
\end{align}
Form~\eqref{eqn:out_pi} and~\eqref{eqn:fact_convex} we obtain,
\begin{align}
\begin{split}
    \nabla \sum_{j \in S_0} g_j(x_{\Pi}) = \sum_{j \in S_0} \nabla g_j(x_\Pi) & = {\bf 0}, \text{ and } \\
    \nabla \sum_{j \in S_i} g_j(x_{\Pi}) = \sum_{j \in S_i} \nabla g_j(x_{\Pi}) &= {\bf 0}. \label{eqn:out_grad_1}
\end{split}
\end{align}
Recall that $S_0 = S_i \cup \{i\}$. Therefore,
\begin{align}
      \nabla g_i(x_\Pi) + \nabla \sum_{j \in S_i} g_j(x_{\Pi}) = \nabla \sum_{j \in S_0} g_j(x_{\Pi}). \label{eqn:out_grad_2}
\end{align}
From~\eqref{eqn:out_grad_1} and~\eqref{eqn:out_grad_2} we obtain,
\[\nabla g_i(x_\Pi) = {\bf 0}\]
As the cost functions are assumed to be convex, the above implies that,
\begin{align}
    x_\Pi \in \arg \min_x ~ g_i(x)
\end{align}
By repeating the above argument for each $i \in \{1, \ldots, \, n\}$, we have
\begin{align}
    x_\Pi \in \arg \min_x ~ g_i(x), \quad \forall i\in \{1, \ldots, \, n\}. \label{eqn:argmin_i}
\end{align}
Therefore, 
\begin{align}
    x_{\Pi}  \in \bigcap_{i = 1}^n \arg \min_x ~ g_i(x) \neq \emptyset.
    \label{e_nonempty}
\end{align}

Similarly, for every non-empty set of agents $T$
\begin{align}
    x_{\Pi}  \in \bigcap_{i\in T} \arg \min_x ~ g_i(x) \neq \emptyset.
    \label{e_nonempty_T}
\end{align}

Thus, $\bigcap_{i \in T} \arg \min ~ g_i(x)\neq\emptyset$.
Then, Lemma \ref{lem:non-empty-x} implies that
\begin{align}
\arg \min \sum_{i \in T} g_i(x) = \bigcap_{i \in T} \arg \min ~ g_i(x)~\neq \emptyset, \quad \forall \text{ non-empty } T \subseteq \{1, \ldots, \, n\}.
\label{eqn:equal_sets}
\end{align}


~
\noindent\hfil\rule{0.75\textwidth}{.4pt}\hfil
~\\





Now we consider execution $E_S$ (defined earlier) in which the nodes in set $S$ are non-faulty.
Using the results derived in the proof so
far,\footnote{Footnote \ref{footnote:k} noted that the notion of $2t$-redundancy can be extended to $k$-redundancy. The proof so far has relied only on 1-redundancy, which is weaker than $2t$-redundancy. The latter part of this proof makes use of $2t$-redundancy.}
we will show that, for any $\widehat{S}\subset S$ subject to $|\widehat{S}|\geq n-2t$, 
\[\arg \min_x \sum_{i \in S} g_i(x) = \bigcap_{i \in \widehat{S}} \arg \min_x ~ g_i(x).\]
The proof concludes once we have shown the above equality.\\ 

Consider an arbitrary subset $\widehat{S}\subset S$ subject to $\mnorm{\widehat{S}} =  n-2t$. It is trivially true that
\begin{align}
    \bigcap_{i \in S} \arg \min_x ~ g_i(x) \subseteq \bigcap_{i \in \widehat{S}} \arg \min_x ~ g_i(x). \label{eqn:subset_1}
\end{align}

So it remains to show that $\bigcap_{i \in S} \arg \min ~ g_i(x)$ is not a strict subset of $\bigcap_{i \in \widehat{S}} \arg \min ~ g_i(x)$.
The proof below is by contradiction.

~
\noindent\hfil\rule{0.75\textwidth}{.4pt}\hfil
~\\

Suppose that
\begin{align}
    \bigcap_{i \in S} \arg \min ~ g_i(x) \subset \bigcap_{i \in \widehat{S}} \arg \min ~ g_i(x). \label{eqn:not_true}
\end{align}
This implies that there exists a point 
\begin{align}
    x^{\dagger} \in \bigcap_{i \in \widehat{S}} \arg \min ~ g_i(x), \label{eqn:x_dag_s_hat}
\end{align}
such that
\begin{align}x^{\dagger} \not \in \bigcap_{i \in S } \arg \min_x ~ g_i(x).
     \label{eq:not-in-S}
\end{align}

Therefore, there exists an $i^{\dagger} \in S$ such that 
\begin{align}
     x^{\dagger} \not \in \arg \min ~ g_{i^\dagger}(x). \label{eqn:x_i_dag}
\end{align}

Let $C=\{1,\cdots,n\}\setminus S$ and
$F=S\setminus \widehat{S}$. Then $|C|=|F|=t$.
Now we define executions $E_C$ and $E_F$.
\begin{itemize}
\item {\em Execution $E_C$:} In execution $E_C$ the $t$ agents in set $C$ are faulty, and the $n-t$ agents in set $S$ are non-faulty.
In execution $E_C$, the behavior of each agent $i \in S$  is consistent with its true cost function being $g_i(x)$, which is identical to its true cost function in execution $E_S$. However, each faulty agent $j \in C$ behaves consistent with a differentiable and convex true cost function  $h_j(x)$ that has a unique minimum at $x^{\dagger}$.
\item
{\em Execution $E_F$:} In execution $E_F$ the $t$ agents in set $F$ are faulty, and the remaining $n-t$ agents in $\widehat{S} \cup C$ are non-faulty.
In execution $E_F$, the behavior of each agent $i \in S$ (including the faulty agents in $F$) is consistent with the cost function $g_i(x)$. 
Each non-faulty agent $j \in C$ behaves consistent with
it true cost function being $h_j(x)$, which is defined in execution $E_C$. Recall that each $h_j(x)$ has a unique minimum at $x^{\dagger}$.
\end{itemize}
Observe that the server cannot distinguish between executions $E_C$ and $E_F$.\\

Now, \eqref{eq:not-in-S} implies that $h_j(x)$ does not minimize at any point in $\bigcap_{i \in S} \arg \min_x ~ g_i(x)$. That is, for every agent $j \in C$, 
\begin{align}
\begin{split}
   & \{x^{\dagger}\} = \arg \min h_j(x), \text{ and } \\ \label{eqn:bias}
   & \left(\bigcap_{i \in S} \arg \min ~ g_i(x) \right) \bigcap  \arg \min h_j(x)  = \emptyset
\end{split}
\end{align}
As $\Pi$ is {\em $t$-resilient}, in execution $E_F$, algorithm $\Pi$ must produce an output in
\begin{align}
\arg \min \left(\sum_{i\in \widehat{S}}g_i(x)+\sum_{j\in C}h_j(x)\right)
\label{e_f}
\end{align}
(Recall that the agents in $\widehat{S}\cup C$ are non-faulty in execution $E_F$.)\\

\eqref{eqn:x_dag_s_hat} and \eqref{eqn:bias} together imply that 
\[
\left(\bigcap_{i\in \widehat{S}} \arg\min g_i(x)\right) 
\bigcap
\left(\bigcap_{j\in C}h_j(x)\right) = \{x^\dagger\}
\]
That is, the above set contains only $x^\dagger$.
This, in turn, by Lemma \ref{lem:non-empty-x} implies that the set in \eqref{e_f} only
contains the point $x^\dagger$, and thus, algorithm $\Pi$
must output $x^\dagger$ in execution $E_F$.\\

Now, since algorithm $\Pi$ cannot distinguish between executions $E_F$ and $E_C$, it must also output $x^\dagger$ in execution
$E_C$ as well. However, from \eqref{eqn:equal_sets} and \eqref{eq:not-in-S}, respectively, we know that
\[
\bigcap_{i\in S}\arg\min g_i(x) = \arg\min \sum_{i\in S}g_i(x)
\]
and
\[
x^\dagger\not\in \bigcap_{i\in S}\arg\min g_i(x).
\]
The above two equations imply that $x^\dagger\not\in\arg\min \sum_{i\in S}g_i(x)$, and $\Pi$ cannot output $x^\dagger$ in execution $E_C$ (otherwise $\Pi$ cannot be {\em $t$-resilient}).
This is a contradiction.\\

Therefore, we have proved that $\bigcap_{i \in S } \arg \min ~ g_i(x)$ is not a strict subset of $\bigcap_{i \in \widehat{S}} \arg \min ~ g_i(x)$.

~
\noindent\hfil\rule{0.75\textwidth}{.4pt}\hfil
~\\

Above result together with \eqref{eqn:subset_1} implies that 
\[\bigcap_{i \in S} \arg \min ~ g_i(x) = \bigcap_{i \in \widehat{S}} \arg \min ~ g_i(x).\]

Recall that $\widehat{S}$ is an arbitrary subset of $S$ with $\mnorm{\widehat{S}} = n-2t$. Therefore, the above implies that for every subset $\widehat{S}$ of $S$ with $\mnorm{\widehat{S}} \geq n-2t$,
\[\bigcap_{i \in S} \arg \min ~ g_i(x) = \bigcap_{i \in \widehat{S}} \arg \min ~ g_i(x).\]
This together with~\eqref{eqn:equal_sets} implies that
\[\arg \min \sum_{i \in S} g_i(x) = \bigcap_{i \in \widehat{S}} \arg \min ~ g_i(x), \quad \forall \text{ non-empty } \widehat{S} \subseteq S, ~ \mnorm{\widehat{S}} \geq n-2t.\]
Thus, if $\Pi$ is {\em $t$-resilient} then the true cost functions of the agents satisfy the {\em $2t$-redundancy} property as stated in Definition~\ref{def:2t_red}. Hence, proving the necessity of {\em $2t$-redundancy} property for {\em $t$-resilience}.
\end{proof}
~

The following collaborative optimization algorithm proves the sufficiency of {\em $2t$-redundancy} for {\em $t$-resilience}.

\subsection{A {\em $t$-resilient} algorithm}
\label{sub:t_algo}

We present an algorithm and prove that it is {\em $t$-resilient} if the agents satisfy the {\em $2t$-redundancy} property stated in Definition~\ref{def:2t_red_alt} or~\ref{def:2t_red}. We will suppose that Assumption~\ref{asp:basic} holds true and $n > 2t$. We only consider the case when $t > 0$, since the case of $t=0$ is trivial.\\



{\bf $t$-Resilient Algorithm:}
The server collects full description of the cost function of each agent. Suppose
that the server obtains cost function $h_j(x)$ from each agent $j\in\{1,\cdots,n\}$.
For each non-faulty agent $i$, $h_i(x)$ is the agent's true objective function.\\

The proposed algorithm outputs a point $x^*$ such that there exists a set $A$ of $n-t$ agents such that
for any $\widehat{A}\subset A$ with $|\widehat{A}|=n-2t$,
\[ x^*\in\arg\min \sum_{i\in \widehat{A}} h_i(x)
\]
If there are multiple candidate points that satisfy the condition above, then any one such point is chosen as the output.\\

Now we prove the correctness of the above algorithm if $2t$-redundancy holds.

\begin{proof}
Assume that the $2t$-redundancy property holds.
First we observe that the algorithm will always be able to output a point
if $2t$-redundancy is satisfied.
Let $S$ denote the set of all non-faulty agents. Recall that $\mnorm{S} \geq n-t$.
In particular, consider a set $A$ that consists of any $n-t$ non-faulty agents, that is, $A\subseteq S$. 
For any $\widehat{A}\subset A$
where $|\widehat{A}|=n-2t$, due to $2t$-redundancy (Definition \ref{def:2t_red}) and Assumption~\ref{asp:basic}, we have
\begin{align}\bigcap_{i \in \widehat{A}} ~ \arg \min_{x \in \R^d} h_i(x) = \arg \min_{x \in \R^d} ~ \sum_{i \in S}  h_i(x) \end{align}
This implies that every point in $\arg\min \sum_{i \in S}  h_i(x)$
is a candidate for the output of the algorithm. Additionally,
due to Assumption \ref{asp:basic},
$\arg\min_x \sum_{i \in S}  h_i(x)$ is guaranteed to be non-empty.
Thus, the algorithm will always produce an output.\\

Next we show that the algorithm achieves $t$-resilience.
Consider any set $A$ for which the condition in the algorithm is true.
The algorithm outputs $x^*$.
From the algorithm, we know that
for any $\widehat{A}\subset A$ with $|\widehat{A}|=n-2t$,
\[ x^*\in\arg\min \sum_{i\in \widehat{A}} h_i(x)
\]
Now, since at most $t$ agents are faulty, there exists at least one set $\widehat{S}$ containing
$n-2t$ non-faulty agents such that $\widehat{S}\subseteq A$ (and also $\widehat{S}\subseteq S$).
Thus, 
\begin{align} x^*\in\arg\min \sum_{i\in \widehat{S}} h_i(x)
\label{e_0000}
\end{align}
Also, since $\widehat{S}\subseteq S$, due to $2t$-redundancy (Definition~\ref{def:2t_red}),  we have
\begin{align}\bigcap_{i \in \widehat{S}} \arg \min h_i(x) = \arg \min \sum_{i \in S}  h_i(x) \label{e_1111}
 \end{align}
Since $\arg \min \sum_{i \in S} h_i(x)$ is non-empty, the last equality implies that $\bigcap_{i \in \widehat{S}} \arg \min h_i(x)$ is non-empty. This, in turn, by Lemma \ref{lem:non-empty-x} implies that
\[ \arg\min \sum_{i\in \widehat{S}} h_i(x) = \bigcap_{i \in \widehat{S}} \arg \min h_i(x)\]
The last equality, \eqref{e_0000} and \eqref{e_1111} together imply that 
$$x^*\in \arg \min \sum_{i \in S}  h_i(x).$$
Thus, the above algorithm achieves $t$-resilience.
\end{proof}
~

It should be noted that the correctness of the $t$-resilient algorithm presented above does not require differentiability or convexity of the agents' true cost functions. Therefore, the $2t$-redundancy is a sufficient condition for $t$-resilience even when the agents' cost functions are non-differentiable and non-convex.\\

\noindent {\bf Alternate $t$-resilient algorithms:} There exist other, and more practical, algorithms to achieve \mbox{$t$-resilience} when $2t$-redundancy holds. However, there is a trade-off between algorithm complexity and additional properties assumed for the cost functions.
\begin{itemize}
    \item We present an alternate, computationally simpler, {\em $t$-resilient} algorithm in Section~\ref{sec:algo} for the case when the minimum values of each true cost function is zero.
    \item In our prior work~\cite{gupta2019byzantine}, we proposed a gradient-descent based distributed algorithm that is {\em $t$-resilient} if the cost functions have certain additional properties presented in Section~\ref{sec:norm}. The algorithm uses a computationally simple ``comparative gradient clipping'' mechanism to tolerate Byzantine faults.
\end{itemize}

\subsection{Prior work on redundancy}
\label{sub:red_prior}

To the best of our knowledge, there is no prior work on the tightness of {\em $2t$-redundancy} property for {\em $t$-resilience} in collaborative optimization. Nevertheless, it is worthwhile to note that conditions with some similarity to $2t$-redundancy are known to be necessary and sufficient for fault-tolerance in other systems, such as information coding and collaborative multi-sensing (or {\em sensor fusion}), discussed below. We note that collaborative multi-sensing can be viewed as a special case of the collaborative optimization problem presented in this report.\\

{\bf Redundancy for error-correction coding:} Digital machines store or communicate information using a finite length sequence of {\em symbols}. However, these {\em symbols} are may become erroneous due to faults in the system or during communication. A way to recover the information despite such error is to use an error-correction {\em code}. An error-correction code transforms (or {\em encodes}) the original sequence of symbols into another sequence of symbols called a {\em codeword}. It is well-known that a {\em code} that generates codewords of length $n$ can correct (or tolerate) up to $t$ symbols errors if and only if the Hamming distance between any two codewords of the {\em code} is at least $2t + 1$~\cite{lindell2010introduction, van1971coding}. There exist codes (e.g., Reed-Solomon codes) such that the sequence of symbols encoded in a codeword  can be uniquely determined using any $n-2t$ correct symbols of the codeword. \\

{\bf Redundancy for fault-tolerant state estimation:} The problem of collaborative optimization finds direct application in distributed sensing~\cite{rabbat2004distributed}. In this problem, the system comprises multiple sensors, and each sensor makes partial observations about the state of the system. The goal of the sensors is to collectively compute the complete state of the system. However, if a sensor is faulty then it may share incorrect observations. The problem of fault-tolerance in collaborative sensing for the special case wherein the sensors' observations are {\em linear} in the system state has gained significant attention in recent years~\cite{bhatia2015robust, chong2015observability, pajic2017attack, pajic2014robustness, shoukry2015imhotep, shoukry2017secure, su2018finite}. Chong et al., 2015~\cite{chong2015observability} and Pajic et al., 2015~\cite{pajic2014robustness} showed that the system state can be accurately computed when up to $t$ (out of $n$) sensors are faulty {\em if and only if} the system is {\em $2t$-sparse observable}, i.e., the state can be computed uniquely using observations of only $n-2t$ non-faulty sensors. We note that the property of {\em $2t$-sparse observability} is a special instance of the more general {\em $2t$-redundancy} property presented in this report. Moreover, the necessity and sufficiency of the \mbox{\em $2t$-redundancy} property proved in this report implies the necessity and sufficiency of {\em $2t$-sparse observability} for fault-tolerant state estimation for a more general setting wherein the sensor observations may be non-linear; however, the converse is not true.\\

Next, we consider the case when the cost functions are independent, and may not satisfy the {\em $2t$-redundancy} property. 


\section{The case of Independent Cost Functions}
\label{sec:ind}

In this section, we present the case when the true cost functions of the agents are independent. Throughout this section we assume that $t > 0$, otherwise the problem of resilience is trivial.\\

We show below by construction that when the true cost functions are non-negative then {\em $(u, t)$-weak resilience} can be achieved for $u \geq t$ even if the true cost functions are independent. Note that, by Definition~\ref{def:weak_res}, when the true cost functions of the agents are non-negative then {\em $(u, t)$-weak resilience} trivially implies {\em $(u^\dagger, t)$-weak resilience} where $u^\dagger \geq u$. Therefore, achievability of {\em $(t, \, t)$-weak resilience} implies the achievability of {\em $(u, \, t)$-weak resilience} for all $u \geq t$.\\

In the subsequent subsection we present a collaborative optimization algorithm that guarantees {\em$(t, \, t)$-weak resilience} when the true cost functions are non-negative and $n > 2t$. 
In Section \ref{ss:t}, we show that the algorithm below also achieves $t$-resilience
under certain conditions.


\subsection{Algorithm for {\em $(t,\,t)$-Weak Resilience}}
\label{sec:algo}

In the proposed algorithm, the server obtains a full
description of the agents' cost functions.
We denote the function obtained by the server from agent $i$ as $h_i(x)$.
Let the true cost function of each agent $i$ be denoted $f_i(x)$. Then for each non-faulty agent $i$,
$h_i(x) = f_i(x), ~ \forall \, x$. On the other hand, for each faulty agent $i$, $h_i(x)$ may not necessarily equal $f_i(x)$.
\\


The algorithm comprises three steps: 

\begin{itemize}
    \item\textit{Pre-processing Step}: For any agent $j$, if $h_j(x)$ is not non-negative for some $x$ or $\min h_j(x)$ is not finite (or does not exist), then $j$ must be faulty. Remove $j$ from the system. Decrement $t$ and $n$ each by 1 for each agent thus removed. In other words,
    the cost functions of the remaining agents are non-negative. Also, it is easy to see that if the faulty agents are in the minority then $n>2t$ after pre-processing for the updated values of $n$ and $t$.\footnote{A worst-case adversary may ensure that $h_i(x)$ for faulty agent $i$ is non-negative, so that no faulty agents will be eliminated in the pre-processing step.}

    \item\textit{Step 1}: For each set $A$ of agents such that $|A|=n-t$, compute 
    \[v_A = \min_{x \in \R^d} \, \sum_{i \in A} h_i(x).\]
    \item\textit{Step 2}: Determine a subset $\widehat{A}$ of size $n-t$ such that 
    \begin{align}
        v_{\widehat{A}} = \min \left\{ v_A ~ \vline ~ A \subseteq \{1, \ldots, \, n\}, ~ \mnorm{A} = n-t \right\} \label{eqn:vs_hat}
    \end{align}
{\bf Output} a point $\widehat{x} \in \arg \min_x  \sum_{i \in \widehat{A}} h_i(x)$.
\end{itemize}

~
\noindent\hfil\rule{0.75\textwidth}{.4pt}\hfil
~\\


Now we prove that the algorithm is {\em $(t,t)$-weak resilient}. It should be noted that the {\em $(t,t)$-weak resilience} property of the algorithm holds true despite the true cost function being {\em non-convex} and {\em non-differentiable}.

\begin{theorem}
\label{thm:algo}
Suppose that Assumption~\ref{asp:basic} holds, and $n> 2t$. If the true cost functions are non-negative then the above algorithm is $(t, \, t)$-weak resilient.
\end{theorem}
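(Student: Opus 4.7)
The plan is to exhibit the witness set $\widehat{S}$ required by Definition~\ref{def:weak_res} as $\widehat{S} = \widehat{A} \cap S$, where $S$ denotes the (unknown) set of non-faulty agents and $\widehat{A}$ is the size-$(n{-}t)$ subset selected by the algorithm. First I would verify the size requirement: the complement of $S$ contains exactly the faulty agents, so $|\widehat{A} \setminus S| \leq n - |S|$, and therefore
\[
|\widehat{S}| \;=\; |\widehat{A}| - |\widehat{A} \setminus S| \;\geq\; (n-t) - (n - |S|) \;=\; |S| - t,
\]
matching Definition~\ref{def:weak_res} with $u = t$. Also, $t \leq |S|$ holds automatically since $|S| \geq n - t > t$ by the assumption $n > 2t$, so $u = t$ is in the admissible range.

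Next I would establish the inequality $\sum_{i \in \widehat{S}} f_i(\widehat{x}) \leq \min_x \sum_{i \in S} f_i(x)$ by chaining two bounds. On one side, because $h_i = f_i$ on the non-faulty set and because, after the pre-processing step, every remaining $h_i$ is non-negative, dropping the (at most $t$) faulty summands in $\widehat{A} \setminus \widehat{S}$ gives
\[
\sum_{i \in \widehat{S}} f_i(\widehat{x}) \;=\; \sum_{i \in \widehat{S}} h_i(\widehat{x}) \;\leq\; \sum_{i \in \widehat{A}} h_i(\widehat{x}) \;=\; v_{\widehat{A}}.
\]
On the other side, I would pick any $A^* \subseteq S$ with $|A^*| = n-t$, which exists because $|S| \geq n-t$. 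Since $\widehat{A}$ is the minimizer in~(\ref{eqn:vs_hat}), $v_{\widehat{A}} \leq v_{A^*}$. Moreover, $A^*$ consists entirely of non-faulty agents, so $v_{A^*} = \min_x \sum_{i \in A^*} f_i(x)$, and since $f_i \geq 0$ on $S \setminus A^*$ one has $\sum_{i \in A^*} f_i(x) \leq \sum_{i \in S} f_i(x)$ pointwise, giving $v_{A^*} \leq \min_x \sum_{i \in S} f_i(x)$. Concatenating these bounds yields the desired inequality.

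The only delicate point — and the step that really earns the theorem — is ensuring the faulty contributions cannot break the first chain of inequalities. If an adversarial agent supplied an $h_i$ that dipped negative somewhere or failed to admit a finite minimum, adding its summand to the left-hand side would not necessarily yield an upper bound on $v_{\widehat{A}}$, and the argument would collapse. This is precisely why the pre-processing step must eliminate any agent whose reported $h_i$ fails non-negativity or lacks a finite minimum; after pre-processing, $h_i \geq 0$ holds for every remaining $i$, which legitimizes the inequality $\sum_{\widehat{S}} h_i(\widehat{x}) \leq \sum_{\widehat{A}} h_i(\widehat{x})$. Finally, Assumption~\ref{asp:basic} together with $n > 2t$ guarantees that at least one valid $A \subseteq S$ of size $n-t$ exists so that the minimum in~(\ref{eqn:vs_hat}) is achieved; no convexity, differentiability, or any further structural assumption on the $f_i$ is used anywhere.
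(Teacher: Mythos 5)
Your proof is correct and follows essentially the same route as the paper's: the witness set $\widehat{S}=\widehat{A}\cap S$ is exactly the paper's $S_1$, and the chain $\sum_{i\in\widehat{S}}f_i(\widehat{x})\leq v_{\widehat{A}}\leq v_{A^*}\leq \min_x\sum_{i\in S}f_i(x)$ (dropping the non-negative faulty summands, then invoking minimality of $v_{\widehat{A}}$ over an all-non-faulty set of size $n-t$, then non-negativity on $S\setminus A^*$) is the paper's argument verbatim. Your added remarks on why the pre-processing step is indispensable match the paper's own footnoted discussion, so nothing further is needed.
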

\begin{proof}
Suppose that, before the pre-processing step $n-t = a$ and $n-2t = b$. In the proof, we consider the set of agents, and the values of $n$ and $t$ after the pre-processing step of the algorithm. 
In the worst-case for the algorithm, all faulty agents will send non-negative functions, thus, no faulty agents are removed in the pre-processing step. Also observe that, in general, for the updated values of $n$ and $t$ after the pre-processing step,
 (i) $n-t = a$ (i.e., $n-t$ remains unchanged), and (ii) $n-2t \geq b$, and (iii) $n>2t$.\\

For an execution of the proposed algorithm, let $\F$ denote the set of up to $t$ faulty agents, and let $S$ denote the set of non-faulty agents. Thus,
$|S|+|\F|=n$.\\

Recall the definition of $\widehat{A}$ in the algorithm above.
Let
\begin{align}
    S_1 = S \cap \widehat{A}\\
    F_1 = \F \cap \widehat{A}
\end{align}
Thus, $\widehat{A}=S_1\cup F_1$.
Since $\mnorm{\widehat{A}}=n-t$ and $\mnorm{\F}\leq t<n/2$,
we have that $\mnorm{S_1}\geq \mnorm{S} - t$ and $\mnorm{F_1}\leq t$.\\


First, note that owing to the {\em pre-processing step} and Assumption~\ref{asp:basic}, for every set of $n-t$ agents $A$, $v_{A} = \min \, \sum_{i \in A} h_i(x)$ exists and is finite. \\

Now, note that
\[v_{\widehat{A}} = \min \sum_{i \in \widehat{A}} h_i(x) = \sum_{i \in \widehat{A}} h_i(\widehat{x}) = \sum_{i \in  S_1} h_i(\widehat{x}) + \sum_{j \in F_1} h_j(\widehat{x}).\]
From~\eqref{eqn:vs_hat}, $v_{\widehat{A}}  \leq v_A$ for all sets $A$ of size $n-t$. Therefore, there exists a subset $S' \subseteq S$ with $\mnorm{S'}=n-t$ such that 
\[v_{\widehat{A}} \leq v_{S'}.\]
From above we obtain,
\begin{align*}
    \sum_{i \in  S_1} h_i(\widehat{x}) + \sum_{j \in F_1} h_j(\widehat{x}) \leq v_{S'} = 
    \min \sum_{i \in S'} h_i(x) . 
\end{align*}
Recall that $h_i(x)= f_i(x)$ for all $i \in S$. As $S_1$ and $S'$ are subsets of $S$, the above implies that,
\begin{align}
    \sum_{i \in  S_1} f_i(\widehat{x}) + \sum_{j \in F_1} h_j(\widehat{x}) \leq v_{S'} = \min \sum_{i \in S'} f_i(x) . \label{eqn:pen_1}
\end{align}
Each $h_j(x)$ is a non-negative function (due to the pre-processing step).
Therefore, $h_j(\widehat{x}) \geq 0$ for all $j \in F_1$. Substituting this in~\eqref{eqn:pen_1} implies,
\begin{align}
    \sum_{i \in  S_1} f_i(\widehat{x}) \leq \min \sum_{i \in S'} f_i(x). \label{eqn:pen}
\end{align}
As $S'  \subseteq S$, non-negativity of cost functions implies that, 
\[\min \sum_{i \in S'} f_i(x) \leq \min \sum_{i \in S} f_i(x).\]
Substituting the above in~\eqref{eqn:pen} implies,
\begin{align}
    \sum_{i \in  S_1} f_i(\widehat{x}) \leq \min \sum_{i \in S} f_i(x). \label{eqn:proved}
\end{align}
Recall that $\mnorm{S_1} \geq \mnorm{S} - t$. Recalling that the set of non-faulty agents is not affected by the pre-processing step, the above implies that the proposed algorithm is $(t, \,t)$-weak resilient.
\end{proof}
~

The algorithm above is {\em $(t,\, t)$-weak resilient} for the case when each true cost function is non-negative. However, in general, there may exist collaborative optimization algorithms that are {\em $(t,\, t)$-weak resilient} only for the case when each true cost function has minimum value $0$. We present below a normalization technique for generalizing the weak resilience of such algorithms. Specifically, given a collaborative optimization algorithm that is {\em $(u,\, t)$-weak resilient} for the case when each true cost function has minimum value $0$, the presented normalization technique generalizes the algorithm to the case when the true cost
functions are non-negative.\\

Later, we will see that the normalization technique renders a collaborative optimization algorithm that is {\em $(t, t)$-weak resilient} for the case when the true cost functions are non-negative, such as the one presented above, {\em $t$-resilient} if the true cost functions satisfy the {\em $2t$-redundancy} property.

\subsection{Normalized Implementation of {$(u, \, t)$-weak resilient} algorithm}
\label{sub:normalize}

We denote the function obtained by the server from agent $i$ as $h_i(x)$. Let the true cost function of each agent $i$ be denoted $f_i(x)$. Then for each non-faulty agent $i$,
$h_i(x) = f_i(x), ~ \forall \, x$. For each faulty agent $i$, $h_i(x)$ may not necessarily equal $f_i(x)$. \\

Consider an arbitrary algorithm $\Pi$ that achieves \mbox{\em $(u, \, t)$-weak} {\em resilience} when each true cost function has minimum value $0$. With $\Pi$ as a building block, we design an algorithm
$\Pi^+$ using the two-step normalization procedure below. We will refer to $\Pi^+$ as the {\em normalized implementation} of $\Pi$. 


\begin{itemize}
    \item\textit{Step 1}: For each agent $i$, compute $\min h_i(x)$. 
    If $\min h_i(x)$ does not exist or is infinite then remove agent $i$ from the system.
    Decrement $n$ and $t$ each by 1 for each agent thus removed. Otherwise, define an alternate {\em effective cost function} $h^{\dagger}_i$ such that
    \begin{align}
        h^{\dagger}_i(x) = h_i(x) - \min_{x \in \R^d} h_i(x), \quad \forall x \in \R^d. \label{eqn:nm_cost}
    \end{align}
    It is easy to see that if the faulty agents are in the minority (i.e., $n > 2t$ prior to the normalization step) then $n>2t$ upon completion of the normalization step for the updated values of $n$ and $t$.\footnote{A worst-case adversary may ensure that $h_i(x)$ for faulty agent $i$ is non-negative, so that no faulty agents will be eliminated in the normalization step.}\\
    
    The agents that remain after the above step are numbered $1$ through $n$, without loss of generality.
    
    \item\textit{Step 2}: Execute $\Pi$ on the {\em effective cost functions} $h^{\dagger}_1(x), \cdots, ~ h^{\dagger}_n(x)$.
\end{itemize}


The resilience property of algorithm $\Pi^{+}$ is stated below.

\begin{lemma}
\label{lem:zero_non-zero}
Suppose that Assumption~\ref{asp:basic} holds true. If algorithm $\Pi$ is {\em $(u, t)$-weak resilient} when the true cost function of each agent has minimum value equal to zero then $\Pi^+$, the {\em normalized implementation} of $\Pi$, is {\em $(u, t)$-weak resilient} when the true cost functions are non-negative.
\end{lemma}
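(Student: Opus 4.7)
The plan is to transfer the $(u,t)$-weak resilience guarantee of $\Pi$ from the normalized cost functions back to the original non-negative cost functions. First I would verify that the normalization step of $\Pi^+$ does not remove any non-faulty agent: by Assumption~\ref{asp:basic} applied to singleton sets $T=\{i\}$, every non-faulty agent's true cost function $f_i(x)$ attains a finite minimum, so the test in Step~1 retains agent $i$. Therefore only (some) faulty agents can be removed, and both the set of non-faulty agents $S$ and the bound ``at most $t$ faulty agents'' are preserved (with $n$ and $t$ decremented by the same amount). In particular, after normalization we still have $n>2t$ and the non-faulty agents $i\in S$ supply the effective functions
\[
f_i^\dagger(x) \;=\; f_i(x) - \min_{y\in\R^d} f_i(y),
\]
each of which has minimum value $0$.

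Next I would invoke the hypothesized weak resilience of $\Pi$ on the effective functions. Because $\Pi^+$ executes $\Pi$ on the inputs $h_i^\dagger$, and because for every non-faulty $i$ we have $h_i^\dagger=f_i^\dagger$, and each $f_i^\dagger$ has minimum value zero, the assumption on $\Pi$ gives that the output $\widehat x$ satisfies: there exists $\widehat S\subseteq S$ with $\mnorm{\widehat S}\geq\mnorm{S}-u$ and
\[
\sum_{i\in\widehat S} f_i^\dagger(\widehat x) \;\leq\; \min_{x\in\R^d}\sum_{i\in S} f_i^\dagger(x).
\]

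Finally I would unwind the normalization. Expanding $f_i^\dagger = f_i - \min f_i$ in both sides of the last inequality and using linearity of the sum on the right,
\[
\sum_{i\in\widehat S} f_i(\widehat x) \;\leq\; \min_{x\in\R^d}\sum_{i\in S} f_i(x) \;-\; \sum_{i\in S\setminus\widehat S}\min_{y}f_i(y).
\]
Since each true cost function $f_i$ is non-negative, each $\min_y f_i(y)\geq 0$, so the subtracted term is non-negative, yielding
\[
\sum_{i\in\widehat S} f_i(\widehat x) \;\leq\; \min_{x\in\R^d}\sum_{i\in S} f_i(x),
\]
which is exactly the $(u,t)$-weak resilience condition of Definition~\ref{def:weak_res} for $\Pi^+$. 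The only subtlety to watch, which I flagged at the start, is the bookkeeping around the normalization step: one must be careful that $\widehat S$ consists of agents that are non-faulty \emph{in the original problem} (not merely surviving after Step~1), and that the minimum of $f_i$ used in the algebraic manipulation coincides with $\min h_i$ for $i\in S$; both follow from the observation that $h_i=f_i$ for every non-faulty $i$.
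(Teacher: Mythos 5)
Your proposal is correct and follows essentially the same route as the paper's own proof: verify that Step~1 removes no non-faulty agent (via Assumption~\ref{asp:basic}), apply the $(u,t)$-weak resilience of $\Pi$ to the effective functions $f_i^\dagger$ with zero minimum, then expand $f_i^\dagger = f_i - \min_y f_i(y)$ and drop the non-negative term $\sum_{i \in S\setminus\widehat{S}}\min_y f_i(y)$. The bookkeeping subtleties you flag at the end are exactly the ones the paper also handles.
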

\begin{proof}
In the proof, we consider the set of agents, and the values of $n$ and $t$ after the {\em step 1} of the normalization procedure. Note that, due to Assumption~\ref{asp:basic}, the set of non-faulty agents is not affected by step 1. Let the true cost function of each agent $i$ be denoted $f_i$. The true cost 
functions are assumed non-negative.\\

Suppose that algorithm $\Pi$ is $(u, t)$-weak resilient when each true cost function has minimum value equals zero. For an execution of the algorithm $\Pi^+$, let set $S$ with $\mnorm{S} \geq n-t$ denote the set of non-faulty agents. Let the output of $\Pi^+$ be denoted by $\widehat{x}$. \\

Due to Assumption~\ref{asp:basic}, for each agent $i$, $\min_{x \in \R^d} f_i(x)$ exists and is finite. For each agent $i$, let $f^{\dagger}_i$ denote a function such that
\begin{align}
    f^{\dagger}_i(x) = f_i(x) - \min_{y \in \R^d} f_i(y), \quad \forall x \in \R^d. \label{eqn:effective_true_cost}
\end{align}
Therefore, for each agent $i$, 
\begin{align}
    \min_{x \in \R^d} f^{\dagger}_i(x) = \min_{x \in \R^d} \left( f_i(x) - \min_{y \in \R^d} f_i(y) \right) = \min_{x \in \R^d} f_i(x) - \min_{y \in \R^d} f_i(y) = 0. \label{eqn:zero_min_non-faulty}
\end{align}

Note that from~\eqref{eqn:nm_cost} in the {\em step 1}, if an agent $i$ is non-faulty then $h^{\dagger}_i = f^{\dagger}_i$. Therefore, the true cost functions in {\em step 2}, i.e., during the execution of algorithm $\Pi$, are $f^\dagger_1, \cdots, ~ f^\dagger_n$. This together with~\eqref{eqn:zero_min_non-faulty} implies that each true cost function has minimum value equal to $0$ during the execution of $\Pi$. As $\Pi$ is assumed $(u, t)$-weak resilient for the case when each true cost function has minimum value equal to zero, by Definition~\ref{def:weak_res}, there exists $\widehat{S} \subseteq S$ with $\mnorm{\widehat{S}} \geq \mnorm{S} - u$ such that
\begin{align*}
    \sum_{i \in \widehat{S}} f^\dagger_i(\widehat{x}) ~ \leq ~ \min_{x \in \R^d} \, \sum_{i \in S} f^\dagger_i(x).
\end{align*}
Substituting from~\eqref{eqn:effective_true_cost} above we obtain,
\begin{align*}
    \sum_{i \in \widehat{S}} \left( f_i(\widehat{x}) - \min_{y \in \R^d} f_i(y) \right) ~ \leq ~ \min_{x \in \R^d} \, \sum_{i \in S} \left( f_i(x) - \min_{y \in \R^d} f_i(y)\right).
\end{align*}
Trivially, for each $i$, 
\[\min_{x \in \R^d} \left( \min_{y \in \R^d} f_i(y) \right) = \min_{y \in \R^d} f_i(y).\]
Therefore, from above we obtain,
\begin{align*}
    \sum_{i \in \widehat{S}} f_i(\widehat{x}) -  \sum_{i \in \widehat{S}} \min_{y \in \R^d} f_i(y) ~ \leq ~ \min_{x \in \R^d} \, \sum_{i \in S} f_i(x) - \sum_{i \in S} \, \min_{y \in \R^d} f_i(y).
\end{align*}
Upon rearranging the terms we obtain,
\begin{align}
    \sum_{i \in \widehat{S}} f_i(\widehat{x}) ~ \leq ~ \min_{x \in \R^d} \, \sum_{i \in S} f_i(x) - \sum_{i \in S \setminus \widehat{S}} \, \min_{y \in \R^d} f_i(y). \label{eqn:out_norm_pi_1}
\end{align}
As the true cost functions $f_1, \ldots, ~ f_n$ are assumed non-negative, i.e., $f_i(x) \geq 0$ for all $x$ and $i$, then $\min_{y \in \R^d} f_i(y) \geq 0$ for all $i$. From substituting this in~\eqref{eqn:out_norm_pi_1} we obtain,
\begin{align*}
    \sum_{i \in \widehat{S}} f_i(\widehat{x}) ~ \leq ~ \min_{x \in \R^d} \, \sum_{i \in S} f_i(x).
\end{align*}
Thus, by Definition~\ref{def:weak_res}, the {\em normalize implementation} of algorithm $\Pi$, i.e., $\Pi^+$, is {\em $(u, \, t)$-weak resilient} when the true cost functions are non-negative.
\end{proof}

\subsection{$t$-Resilience Property}
\label{ss:t}
In this section, we show that if the true cost functions are non-negative, and satisfy the {\em $2t$-redundancy} property, then the normalized implementation of a {\em $(t, \,t)$-weak resilient} algorithm, such as the one presented above, is also {\em $t$-resilient}. First, let us consider the special case wherein each true cost function has minimum value equal to zero.

\begin{lemma}
\label{lem:weak_strong}
Suppose that Assumption~\ref{asp:basic} holds true, and $n > 2t$. If the true cost functions of the agents satisfy the {\em $2t$-redundancy} property, and each true cost function has minimum value equal to zero, then a \mbox{\em $(t, \,t)$-weak} {\em resilient} algorithm is also \mbox{\em $t$-resilient}.
\end{lemma}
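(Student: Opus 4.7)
The plan is to combine the inequality delivered by $(t,t)$-weak resilience with the structural consequences of $2t$-redundancy in the special case where every true cost function $f_i$ has minimum value zero (hence $f_i(x)\geq 0$ for all $x$). Fix an execution of the weak-resilient algorithm $\Pi$, let $S$ denote the set of non-faulty agents (so $|S|\geq n-t$), and let $\widehat{x}$ be its output. By Definition \ref{def:weak_res}, there exists $\widehat{S}\subseteq S$ with $|\widehat{S}|\geq |S|-t\geq n-2t$ such that
\[
\sum_{i\in\widehat{S}} f_i(\widehat{x})~\leq~\min_{x\in\R^d}\sum_{i\in S}f_i(x).
\]
Note that the size bound $|\widehat{S}|\geq n-2t$ is precisely what is needed to invoke Definition \ref{def:2t_red}, so this alignment is a key sanity check and not a coincidence.

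Next I would argue that the right-hand side of the above inequality is actually $0$. By Definition \ref{def:2t_red}, $\arg\min\sum_{i\in S}f_i(x)$ equals $\bigcap_{i\in\widehat{S}'}\arg\min f_i(x)$ for \emph{any} $(n-2t)$-subset $\widehat{S}'\subseteq S$. Pick $y^*$ in this common set (nonempty by Assumption \ref{asp:basic}). For each $j\in S$, since $|S|\geq n-t\geq n-2t$ I can choose some $\widehat{S}'\subseteq S$ of size $n-2t$ with $j\in\widehat{S}'$, which forces $y^*\in\arg\min f_j$, i.e.\ $f_j(y^*)=0$. Summing over $i\in S$ gives $\sum_{i\in S}f_i(y^*)=0$, and non-negativity of the $f_i$'s implies $\min_x\sum_{i\in S}f_i(x)=0$.

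Combining these two steps, $\sum_{i\in\widehat{S}}f_i(\widehat{x})\leq 0$. Since each $f_i\geq 0$, every summand must vanish, so $f_i(\widehat{x})=0=\min_x f_i(x)$ for each $i\in\widehat{S}$, and therefore $\widehat{x}\in\bigcap_{i\in\widehat{S}}\arg\min f_i(x)$. Applying Definition \ref{def:2t_red} once more (with the same $\widehat{S}\subseteq S$ satisfying $|\widehat{S}|\geq n-2t$, $|S|\geq n-t$), the right-hand intersection coincides with $\arg\min\sum_{i\in S}f_i(x)$, so $\widehat{x}\in\arg\min\sum_{i\in S}f_i(x)$. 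This is exactly the defining property of $t$-resilience.

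I expect the argument to be short and the only real obstacle is the intermediate step showing $\min_x\sum_{i\in S}f_i(x)=0$; once that is established, non-negativity does all the remaining work. The verification that the $\widehat{S}$ supplied by weak resilience automatically satisfies the $n-2t$ threshold required by $2t$-redundancy is the crucial bookkeeping that makes the two hypotheses interlock cleanly.
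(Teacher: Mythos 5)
Your proof is correct and follows essentially the same route as the paper: both arguments first use $2t$-redundancy plus the zero-minimum hypothesis to show $\min_x\sum_{i\in S}f_i(x)=0$, then invoke the $(t,t)$-weak resilience inequality and non-negativity to force $f_i(\widehat{x})=0$ for all $i$ in the guaranteed subset $\widehat{S}$, and finally apply Definition~\ref{def:2t_red} to $\widehat{S}$ (using $|\widehat{S}|\geq|S|-t\geq n-2t$) to conclude $\widehat{x}\in\arg\min\sum_{i\in S}f_i(x)$. No gaps.
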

\begin{proof}
Let $\Pi$ be a \mbox{\em $(t, \,t)$-weak resilient} collaborative optimization algorithm. Consider an execution of $\Pi$, named $E_\F$, where $\F$ denotes the set of faulty agents with $\mnorm{\F} \leq t$. The remaining agents in $S = \{1, \ldots, \, n\} \setminus \F$ are non-faulty. Suppose that the true cost function of each agent $i$ in execution $E_\F$ is $f_i$.\\

As $E_\F$ is an arbitrary execution, to prove the lemma it suffices to show that the output of $\Pi$ in execution $E_\F$ is a minimum of the sum of the true cost functions of all the non-faulty agents $S$.\\

We have assumed that the minimum values of the functions $f_1(x), \ldots, \, f_n(x)$ are zero, i.e., 
\begin{align}
    \min_{x \in \R^d} f_i(x) = 0, \quad  ~ 1 \leq i \leq n. \label{eqn:min_zero}
\end{align}
In the rest of the proof, the notation `$\min_{x \in \R^d}$' is simply written as `$\min$' unless otherwise noted.\\

By applying the condition in Definition~\ref{def:2t_red} of {\em $2t$-redundancy} property for all
possible $\widehat{S}\subseteq S$ (where $|\widehat{S}| \geq n-2t$) we can conclude that the set
$\arg\min \sum_{i \in S} f_i(x)$ is contained in  the set $\arg\min f_i(x)$ for each $i\in S$. 
This, and the fact that each individual cost function has minimum value 0, implies that
\begin{align*}
    \min \sum_{i \in S} f_i(x) = \sum_{i \in S} \min f_i(x).
\end{align*}
Substituting from~\eqref{eqn:min_zero} above implies that
\begin{align}
    \min \sum_{i \in S} f_i(x) = 0.  \label{eqn:v_0}
\end{align}
Let $x_\Pi$ denote the output of $\Pi$. As $\Pi$ is \mbox{\em $(t, \,t)$-weak resilient}, there exists a subset $\widehat{S}$ of $S$ of size $\mnorm{S} - t$ such that
\[\sum_{i \in \widehat{S}} f_i(x_\Pi) \leq \min \sum_{i \in S} f_i(x).\]
Substituting from~\eqref{eqn:v_0} above implies that 
\[\sum_{i \in \widehat{S}} f_i(x_\Pi) \leq 0.\]
From~\eqref{eqn:min_zero}, $f_i(x_\Pi) \geq 0, ~ \forall  \, i$. The above implies that 
\begin{align*}
    f_i(x_\Pi) = 0, \quad \forall \, i \in \widehat{S}.
\end{align*}
Alternately,
\begin{align}
    x_\Pi \in \bigcap_{i \in \widehat{S}} \arg \min f_i(x). \label{eqn:x_int_S}
\end{align}
As $\mnorm{\widehat{S}} = \mnorm{S} - t \geq n - 2t$, the {\em $2t$-redundancy} property implies that
\[\bigcap_{i \in \widehat{S}} \arg \min f_i(x) = \arg \min \sum_{i \in S} f_i(x).\]
From substituting the above in~\eqref{eqn:x_int_S} we obtain,
\[x_\Pi \in \arg \min \sum_{i \in S} f_i(x).\]
Thus, algorithm $\Pi$ achieves $t$-resilience.
\end{proof}
~

Utilizing the Lemma~\ref{lem:weak_strong} we show that the {\em normalized implementation} of a {\em $(t, \,t)$-weak resilience} is {\em $t$-resilience} when the true cost functions are non-negative, and satisfy the {\em $2t$-redundancy} property. Specifically, we have the following theorem. 


\begin{theorem}
\label{thm:weak_strong}
Suppose that Assumption~\ref{asp:basic} holds true, $n > 2t$, and we are given an algorithm $\Pi$ that is \mbox{\em $(t, \,t)$-weak} {\em resilient} when each true cost function has minimum value $0$. Then the algorithm $\Pi^+$ obtained as the {\em normalized implementation} of $\Pi$ is \mbox{\em $t$-resilient} when the true cost functions are non-negative, and satisfy the {\em $2t$-redundancy} property. 
\end{theorem}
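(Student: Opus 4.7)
The plan is to combine the two supporting lemmas already established, namely Lemma~\ref{lem:zero_non-zero} and Lemma~\ref{lem:weak_strong}, by viewing $\Pi^+$ as the algorithm $\Pi$ executed on the shifted cost functions. Let $f_i$ denote the true cost function of agent $i$ (assumed non-negative and satisfying $2t$-redundancy), and define the shifted functions $f^\dagger_i(x) = f_i(x) - \min_{y} f_i(y)$. By Assumption~\ref{asp:basic} each $\min f_i$ is finite, so the shift is well-defined, and by construction $\min f^\dagger_i(x) = 0$ for every agent $i$. Crucially, Step~1 of the normalization may only remove faulty agents (all non-faulty agents have finite minima by Assumption~\ref{asp:basic}), so the set $S$ of non-faulty agents is unaffected and the post-normalization values of $n$ and $t$ still satisfy $n>2t$.

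The first key observation to establish is that the $2t$-redundancy property is preserved under per-function constant shifts. This follows immediately from Definition~\ref{def:2t_red_alt} because for any function $g$ and any constant $c$, $\arg\min_x (g(x)-c) = \arg\min_x g(x)$. Consequently,
\[
\bigcap_{i\in S_1}\arg\min f^\dagger_i(x) \;=\; \bigcap_{i\in S_1}\arg\min f_i(x),
\]
for every subset $S_1 \subseteq S$ of size $n-2t$, and likewise for $S_2$. Hence the shifted true cost functions $\{f^\dagger_i\}_{i \in S}$ inherit $2t$-redundancy from $\{f_i\}_{i \in S}$.

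Next, I would interpret the execution of $\Pi^+$ as an execution of $\Pi$ whose true cost functions are precisely the $f^\dagger_i$: this is exactly how Step~2 of the normalization procedure operates, and for each non-faulty agent $i$ we have $h^\dagger_i = f^\dagger_i$ by \eqref{eqn:nm_cost}. Since the $f^\dagger_i$ satisfy the hypotheses of Lemma~\ref{lem:weak_strong} (each minimum value equals zero, and $2t$-redundancy holds), and $\Pi$ is assumed $(t,t)$-weak resilient under these conditions, Lemma~\ref{lem:weak_strong} yields that the output $\widehat{x}$ of $\Pi$ (and hence of $\Pi^+$) satisfies
\[
\widehat{x} \in \arg\min_{x\in\R^d} \sum_{i\in S} f^\dagger_i(x).
\]

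The final step is to translate this conclusion back to the original functions. Since $\sum_{i\in S} f^\dagger_i(x) = \sum_{i\in S} f_i(x) - \sum_{i\in S} \min_y f_i(y)$ and the second term is independent of $x$, the two sums share the same set of minimizers. Therefore $\widehat{x}\in\arg\min \sum_{i\in S} f_i(x)$, establishing $t$-resilience of $\Pi^+$. I do not anticipate a serious obstacle here: the main conceptual point is to verify the preservation of $2t$-redundancy under constant shifts, and the mild subtlety to watch is the bookkeeping around the normalization step (confirming that $S$ is unchanged and that $n>2t$ persists), both of which are already observed in the discussion preceding Lemma~\ref{lem:zero_non-zero}.
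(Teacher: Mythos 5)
Your proposal is correct and follows essentially the same route as the paper's own proof: shift each true cost function by its minimum, observe that $2t$-redundancy and the set of non-faulty agents are preserved (and that each shifted function has minimum value zero), invoke Lemma~\ref{lem:weak_strong} to get $t$-resilience of $\Pi$ on the shifted functions, and translate back via the fact that the aggregate over $S$ changes only by an additive constant. The only cosmetic difference is that you announce Lemma~\ref{lem:zero_non-zero} as an ingredient but, like the paper, never actually need it --- Lemma~\ref{lem:weak_strong} carries the whole argument.
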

\begin{proof}
Let the true cost functions of each agent $i$ be denoted by $f_i$. The true cost functions are assumed to be non-negative, i.e,
\begin{align*}
    f_i(x) \geq 0, \quad \forall x \in \R^d, ~ i \in \{1, \ldots, \, n\}.
\end{align*}
The true cost functions $f_1, \ldots, \, f_n$ are also assumed to satisfy the {\em $2t$-redundancy} property, i.e., the condition stated in Definition~\ref{def:2t_red_alt} holds true.\\

Suppose that algorithm $\Pi$ is a \mbox{\em $(t, \,t)$-weak} {\em resilient}. Consider the {\em normalized implementation} of algorithm $\Pi$ presented in Section~\ref{sub:normalize}. It is easy to see that if $n > 2t$ a priori then $n > 2t$ upon completion of the {\em step 1} for the updated values of $n$ and $t$. Also, due to Assumption~\ref{asp:basic}, the set of non-faulty agents are not affected by the {\em step 1}. For the rest of the proof, we consider the set of agents, and the values of $n$ and $t$ after the {\em step 1}.\footnote{In the worst-case for the algorithm, all faulty agents will send functions that have finite minimum values, thus, no faulty agents are removed in the execution step.} 

Recall that due to Assumption~\ref{asp:basic}, for each $i$, $\min_{y \in \R^d} f_i(y)$ exists and finite. Note that, due to~\eqref{eqn:nm_cost} in {\em step 1}, the true cost function of each agent $i$ during the execution of $\Pi$ in Step 2, denoted by $f^\dagger_i$, satisfies the following:
\begin{align}
    f^\dagger_i(x) = f_i(x) - \min_{y \in \R^d} f_i(y), \quad \forall x \in \R^d. \label{eqn:2_f_dag}
\end{align}
As $$\min_{x \in \R^d} \left(\min_{y \in \R^d} f_i(y)\right) = \min_{y \in \R^d} f_i(y),$$ for each $i$, $\min_{x \in \R^d} f^\dagger_i(x) = 0$ and 
\begin{align}
    \arg \min_{x \in \R^d} f^\dagger_i(x) =  \arg \min_{x \in \R^d} \left( f_i(x) - \min_{y \in \R^d}f_i(y) \right) = \arg \min_{x \in \R^d} f_i(x). \label{eqn:preserve_argmin}
\end{align}
Now, consider two arbitrary sets of agents $S_1$ and $S_2$ each of size $n-2t$. As the true cost functions $f_i$'s are assumed to satisfy the {\em $2t$-redundancy} property, by Definition~\ref{def:2t_red_alt} and Assumption~\ref{asp:basic}, 
\[ \emptyset \neq \bigcap_{i \in S_1} \arg \min_{x \in \R^d} f_i(x) = \bigcap_{i \in S_2} \arg \min_{x \in \R^d} f_i(x).\]
Substituting from~\eqref{eqn:preserve_argmin} above we obtain, 
\[ \emptyset \neq \bigcap_{i \in S_1} \arg \min_{x \in \R^d} f^\dagger_i(x) = \bigcap_{i \in S_2} \arg \min_{x \in \R^d} f^\dagger_i(x).\]
As the above holds for any two such subsets $S_1$ and $S_2$, the cost functions $f^\dagger_1, \ldots, \, f^\dagger_n$ satisfy the {\em $2t$-redundancy} property.\\

The above together with Lemma~\ref{lem:weak_strong} implies that $\Pi$, which is executed in the {\em step 2}, is \mbox{\em $t$-resilient} when the true cost function of each agent $i$ is $f^\dagger_i$. Now, consider an execution of $\Pi^+$, the algorithm obtained as the {\em normalized implementation} of $\Pi$, where $S$ denotes the set of non-faulty agents. Let, $\widehat{x}$ denote the output of this execution. Then,
\begin{align}
    \widehat{x} \in \arg \min_{x \in \R^d} \sum_{i \in S} f^\dagger_i(x). \label{eqn:2_out}
\end{align}
From~\eqref{eqn:2_f_dag}, 
\[\sum_{i \in S} f^\dagger_i(x) = \sum_{i \in S} \left( f_i(x) - \min_{y \in \R^d} f_i(y)\right) = \sum_{i \in S} f_i(x) - \sum_{i \in S} \, \min_{y \in \R^d} f_i(y).\]
This implies that $$\arg \min_{x \in \R^d} \sum_{i \in S} f^\dagger_i(x) = \arg \min_{x \in \R^d} \sum_{i \in S} f_i(x).$$ 
Substituting this in~\eqref{eqn:2_out} we obtain,
\[\widehat{x} \in \arg \min_{x \in \R^d} \sum_{i \in S} f_i(x).\]
The above argument holds for every execution of $\Pi^+$. Hence, by Definition~\ref{def:t_res}, the normalized implementation of $\Pi$ is {\em $t$-resilient}.
\end{proof}


We have the following corollary of Theorem~\ref{thm:algo} and Theorem~\ref{thm:weak_strong}.

\begin{corollary}
\label{cor:tight}
If the true cost functions of the agents satisfy the {\em $2t$-redundancy} property, and are non-negative, then the {\em normalized implementation} of the proposed {\em $(t,\, t)$-weak resilient} algorithm in Section~\ref{sec:algo} is {\em $t$-resilient}. 
\end{corollary}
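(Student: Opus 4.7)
The plan is to obtain the corollary by a straightforward chaining of the two main results already proved in this section, namely Theorem~\ref{thm:algo} and Theorem~\ref{thm:weak_strong}. Let $\Pi_0$ denote the algorithm from Section~\ref{sec:algo}, and let $\Pi_0^+$ denote its normalized implementation in the sense of Section~\ref{sub:normalize}. The goal is to verify that $\Pi_0$ meets the precise hypothesis that Theorem~\ref{thm:weak_strong} requires of its ``$\Pi$'', and then invoke that theorem.

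First I would check the hypothesis. Theorem~\ref{thm:weak_strong} asks for an algorithm that is $(t,t)$-weak resilient \emph{when each true cost function has minimum value $0$}. Theorem~\ref{thm:algo} shows that $\Pi_0$ is $(t,t)$-weak resilient \emph{whenever the true cost functions are non-negative}. Any family of cost functions whose individual minima are all equal to $0$ is, in particular, non-negative (since $f_i(x) \geq \min_y f_i(y) = 0$ for all $x$). Hence Theorem~\ref{thm:algo} applied to this more restrictive family immediately yields that $\Pi_0$ is $(t,t)$-weak resilient under the zero-minimum assumption, which is exactly the precondition of Theorem~\ref{thm:weak_strong}.

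Second, I would invoke Theorem~\ref{thm:weak_strong} with $\Pi = \Pi_0$. Under the standing hypotheses of the corollary (Assumption~\ref{asp:basic}, $n>2t$, true cost functions non-negative and satisfying $2t$-redundancy), Theorem~\ref{thm:weak_strong} then concludes that $\Pi_0^+$ is $t$-resilient, which is precisely the statement of the corollary.

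There is no real obstacle here; the only thing to be careful about is the compatibility of the two hypotheses at the interface, i.e., noting explicitly that ``each true cost function has minimum value $0$'' is a strengthening of ``true cost functions are non-negative'', so that the weak-resilience guarantee from Theorem~\ref{thm:algo} is inherited in the specialized form demanded by Theorem~\ref{thm:weak_strong}. Once that is observed, the corollary follows by a one-line citation of the two theorems.
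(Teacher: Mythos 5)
Your proof is correct and matches the paper's intent exactly: the paper presents Corollary~\ref{cor:tight} as an immediate consequence of Theorem~\ref{thm:algo} and Theorem~\ref{thm:weak_strong}, with no further argument. Your explicit check that the zero-minimum hypothesis of Theorem~\ref{thm:weak_strong} is a special case of the non-negativity hypothesis of Theorem~\ref{thm:algo} is the right (and only) interface point to verify, and the paper leaves it implicit.
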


Note that the algorithm presented in this section is computationally much simpler than the {\em $t$-resilient} algorithm previously presented in Section~\ref{sub:t_algo}. However, the algorithm in this section relies on an additional assumption that the true cost function of each non-faulty agent is non-negative. In general, there is a trade-off between complexity of the algorithm, and the assumptions made regarding the true cost functions, as the discussion below also illustrates.




\section{Gradient-Descent Based Algorithm}
\label{sec:norm}

In certain application of collaborative optimization, the algorithms only use information about the gradients of the agents' cost functions. Collaborative learning is one such application~\cite{bottou2018optimization}. Due to its practical importance, fault-tolerance in collaborative learning has gained significant attention in recent years~\cite{alistarh2018byzantine, bernstein2018signsgd, blanchard2017machine, chen2017distributed, xie2018generalized}. \\

In this section, we briefly summarize a gradient-descent based distributed collaborative optimization algorithm wherein the agents only send gradients of their cost functions to the server, instead of sending their entire cost functions. The algorithm was proposed in our prior work~\cite{gupta2019byzantine}, where we proved {\em $t$-resilience} of the algorithm when the true cost functions satisfy the {\em $2t$-redundancy} and certain additional properties.\\



The proposed algorithm is iterative. For an execution of the algorithm, let $S$ denote the set of non-faulty agents and suppose that the true cost functions of the agents are $f_1(x), \ldots, \, f_n(x)$. The server maintains an estimate of the minimum point,
which is updated in each iteration of the algorithm. The initial estimate, named $x^0$, is chosen arbitrarily by the server from $\R^d$. In iteration $s \in \{0, \, 1, \ldots\}$,
the server computes estimate $x^{s+1}$ in steps S1 and S2 as described below.\\

In Step S1, the server obtains from the agents the gradients of their local cost functions at $x^s$. A faulty
agent may send an arbitrary $d$-dimensional vector for its gradient. Each non-faulty agent $i \in S$ sends the gradient of its true cost function at $x^s$, i.e., $\nabla f_i(x^s)$. In Step S2, to mitigate the detrimental impact of such incorrect gradients, the algorithm uses a filter to ``robustify" the gradient aggregation
step. In particular, the gradients with the largest $t$ norms are ``clipped'' so that their norm equals the norm of the $(t+1)$-th largest gradient (or, equivalently, the $(n-t)$-th smallest gradient). The remaining gradients remain unchanged.
The resulting gradients are then accumulated to obtain the update direction, which is then used to compute $x^{t+1}$. We refer to the method used in Step S2 for clipping the largest
$t$ gradients as ``{\em Comparative Gradient Clipping}" (CGC), since the largest $t$ gradients are clipped
to a norm that is ``comparable'' to the next largest gradient. \\

Detailed description of the algorithm and its resilience guarantee can be found in our prior work~\cite{gupta2019byzantine}.
The above algorithm performs correctly despite the use of a simple filter on the gradients, which only takes into account the gradient norms,
not the direction of the gradient vectors. This simplification is possible due to the assumptions made on the cost functions \cite{gupta2019byzantine}. Weaker assumptions will often necessitate more complex algorithms.





\section{Summary of the Results}
\label{sec:sum}
We have made the following key contributions in this report.
\begin{itemize}
\setlength\itemsep{0.5em}
    \item {\bf In case of redundant cost functions:} We proved the necessary and sufficient condition of {\em $2t$-redundancy} for {\em $t$-resilience} in collaborative optimization. We have presented {\em $t$-resilient} collaborative optimization algorithms to demonstrate the trade-off between the complexity of a {\em $t$-resilient} algorithm, and the properties of the agents' cost functions. 
    
    \item  {\bf In case of independent cost functions:} We introduced the metric of {\em $(u, \, t)$-weak resilience} to quantify the notion of resilience in case when the agents' cost functions are independent. We have presented an algorithm that obtains {\em $(u, \, t)$-weak resilience} for all $u \geq t$ when the cost functions are non-negative and $n>2t$.
\end{itemize}

\section*{Acknowledgements}
Research reported in this paper was sponsored in part by the Army Research Laboratory under Cooperative Agreement W911NF- 17-2-0196, and by National Science Foundation award 1842198. The views and conclusions contained in this document are those of the authors and should not be interpreted as representing the official policies, either expressed or implied, of the the Army Research Laboratory, National Science Foundation or the U.S. Government.







\bibliographystyle{plain}
\bibliography{ref.bib}


\appendix
\section{Proof of Lemma \ref{lem:non-empty-x}}
\label{append:non-empty-x}

\noindent{\bf Lemma \ref{lem:non-empty-x}.} {\em 
For a non-empty set $T$, consider a set of functions $g_i(x)$, $i\in T$, such that $$\bigcap_{i\in T}\arg\min_x g_i(x)\neq \emptyset.$$ Then
$$\bigcap_{i\in T}\arg\min_x g_i(x) = \arg\min_x \sum_{i\in T}g_i(x).$$
} \\

\begin{proof}
Consider any non-empty set $T$, and functions $g_i(x)$, $i\in T$, such that $$\bigcap_{i\in T}\arg\min_x g_i(x)\neq \emptyset.$$

{\bf Part I:} Consider any $x^o\in \bigcap_{i \in T} \arg \min_x ~ g_i(x)$.
Since each cost function $g_i(x)$, $i\in T$, is minimized at $x^o$, it
follows that $\sum_{i \in T} g_i(x)$ is also minimized at $x^o$. In other words,
it is trivially true that
\begin{align}
    x^o \in \bigcap_{i \in T} \arg \min_x ~ g_i(x) \subseteq \arg \min_x \sum_{i \in T} g_i(x).\label{eqn:lem:1} 
\end{align}

~

{\bf Part II:} Let $x^-$ be a point such that
$$x^-\in \bigcap_{i \in T} \arg \min_x ~ g_i(x).$$
Then
\begin{align}
 x^- \in\arg \min_x ~ g_i(x), ~~~~\forall\,i\in T
 \label{eqn:argmin_i:lem}
\end{align}

From~\eqref{eqn:lem:1}, $\arg \min_x \sum_{i \in T} g_i(x)\neq\emptyset$.
Now we show that
$\arg \min_x \sum_{i \in T} g_i(x) \subseteq \bigcap_{i \in T} \arg \min_x ~ g_i(x)$.
The proof is by contradiction. \\

Suppose that there exists
a point $x^{\dagger}$ such that
\begin{align}
    x^{\dagger} \in \arg \min_x \sum_{i \in T} g_i(x), \label{eqn:x_dag}
\end{align}
and 
\begin{align}
x^{\dagger} \not \in \bigcap_{i \in T} \arg \min_x ~ g_i(x).
\end{align}
This and \eqref{eqn:argmin_i:lem} implies that there exists $i^{\dagger} \in T$ such that 
\[g_{i^{\dagger}} (x^{\dagger}) > g_{i^{\dagger}} (x^-).\]
Also, from~\eqref{eqn:argmin_i:lem}, for each $i \in T \setminus \{i^{\dagger}\}$, 
\[g_i(x^{\dagger}) \geq g_i(x^-).\] 
The above two inequalities together imply that,
\begin{align*}
    \sum_{i \in T} g_i(x^{\dagger}) = \sum_{i \in T \setminus \{i^{\dagger}\}}g_i(x^{\dagger}) + g_{i^{\dagger}}(x^{\dagger}) > \sum_{i \in T}g_i(x^-).
\end{align*}
The above is a contradiction of~\eqref{eqn:x_dag}. Therefore, $x^{\dagger} \in \bigcap_{i \in T} \arg \min_x ~ g_i(x)$. This implies,
\begin{align}
    \arg \min \sum_{i \in T} g_i(x) \subseteq \bigcap_{i \in T} \arg \min ~ g_i(x).  \label{eqn:in_1}
\end{align}
From~\eqref{eqn:lem:1} and~\eqref{eqn:in_1},
\begin{align}
    \arg \min \sum_{i \in T} g_i(x) = \bigcap_{i \in T} \arg \min ~ g_i(x)  
\end{align}

\end{proof}

\section{Definitions~\ref{def:2t_red_alt} and~\ref{def:2t_red} are Equivalent}
\label{app:equiv_def}

The lemma below shows that the two definitions of {\em $2t$-redundancy}, namely Definition~\ref{def:2t_red_alt} and Definition~\ref{def:2t_red}, stated in Section~\ref{sub:res} are equivalent.

\begin{lemma}
\label{lem:equiv}
Suppose that Assumption~\ref{asp:basic} holds true, and $n > 2t$. Then, conditions in Definition~\ref{def:2t_red_alt} and Definition~\ref{def:2t_red} are equivalent.
\end{lemma}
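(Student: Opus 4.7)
The plan is to establish the equivalence by showing each direction separately, with Lemma~\ref{lem:non-empty-x} serving as the bridge between intersections of individual argmins and argmins of sums.

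For the forward direction (Definition~\ref{def:2t_red_alt} implies Definition~\ref{def:2t_red}), I would first show that the common nonempty intersection $X^* := \bigcap_{i \in S_0} \arg\min f_i$ (over any size-$(n-2t)$ subset $S_0$) actually coincides with the full intersection $\bigcap_{j=1}^n \arg\min f_j$: since $n > 2t$ forces $n - 2t \geq 1$, every agent $j$ sits inside some size-$(n-2t)$ subset, so $X^* \subseteq \arg\min f_j$ for all $j$, and the reverse inclusion is immediate. A sandwich argument using any size-$(n-2t)$ subset $S_0 \subseteq \widehat{S}$ then shows $\bigcap_{i \in \widehat{S}} \arg\min f_i = X^*$ for every $\widehat{S}$ with $|\widehat{S}| \geq n - 2t$, and in particular this applies to $S$ itself (since $|S| \geq n - t \geq n - 2t$), yielding that $\bigcap_{i \in S} \arg\min f_i$ is the nonempty set $X^*$. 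Lemma~\ref{lem:non-empty-x} then delivers $\arg\min \sum_{i \in S} f_i = \bigcap_{i \in S} \arg\min f_i = X^* = \bigcap_{i \in \widehat{S}} \arg\min f_i$, which is exactly Definition~\ref{def:2t_red}.

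For the reverse direction (Definition~\ref{def:2t_red} implies Definition~\ref{def:2t_red_alt}), given two size-$(n-2t)$ subsets $S_1, S_2$, I would first apply Definition~\ref{def:2t_red} to any $S \supseteq S_1$ of size $n - t$ (feasible because $n - 2t \leq n - t \leq n$) to conclude $\bigcap_{i \in S_1} \arg\min f_i = \arg\min \sum_{i \in S} f_i$, which is nonempty by Assumption~\ref{asp:basic}. For equality of the two intersections, fix $x^* \in \bigcap_{i \in S_1} \arg\min f_i$; the goal is to show $x^* \in \arg\min f_j$ for every agent $j$. After handling the trivial $t = 0$ case, I would (for $t \geq 1$) choose $S' \supseteq S_1 \cup \{j\}$ of size $n - t$ (using $n - t \geq n - 2t + 1$) and a size-$(n-2t)$ subset $\widehat{S}_j \subseteq S'$ containing $j$, and then apply Definition~\ref{def:2t_red} twice---once with $(\widehat{S}, S) = (S_1, S')$, once with $(\widehat{S}, S) = (\widehat{S}_j, S')$---to get $\bigcap_{i \in S_1} \arg\min f_i = \arg\min \sum_{i \in S'} f_i = \bigcap_{i \in \widehat{S}_j} \arg\min f_i$. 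Since $j \in \widehat{S}_j$, this forces $x^* \in \arg\min f_j$. As $j$ was arbitrary, $x^* \in \bigcap_{i \in S_2} \arg\min f_i$, giving $\bigcap_{i \in S_1} \arg\min f_i \subseteq \bigcap_{i \in S_2} \arg\min f_i$, and symmetry yields equality.

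The main obstacle is in the reverse direction: one cannot simply fit $S_1$ and $S_2$ into a single size-$(n-t)$ set, since $|S_1 \cup S_2|$ can be as large as $2(n - 2t)$, which exceeds $n - t$ whenever $n > 3t$. The crucial observation that unlocks the argument is that although Definition~\ref{def:2t_red} ties $\widehat{S}$ to a particular superset $S$, its left-hand side depends only on $\widehat{S}$; this frees us to vary $S$ while keeping $S_1$ as an anchor, and to use one-element swaps inside $S'$ to reach every agent $j$ one at a time.
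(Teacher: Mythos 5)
Your proof is correct in both directions. The forward direction (Definition~\ref{def:2t_red_alt} implies Definition~\ref{def:2t_red}) matches the paper's argument in substance: identify the common intersection, sandwich $\bigcap_{i \in \widehat{S}} \arg\min f_i$ between $\bigcap_{i \in S} \arg\min f_i$ and a size-$(n-2t)$ sub-intersection, and invoke Lemma~\ref{lem:non-empty-x}; your version is if anything slightly cleaner, since you identify the common set with $\bigcap_{j=1}^{n} \arg\min f_j$ directly, where the paper reaches the same equality by contradiction. The reverse direction is where you genuinely diverge, and the ``obstacle'' you describe is not actually present: Definition~\ref{def:2t_red} requires only $\mnorm{S} \geq n-t$, not $\mnorm{S} = n-t$, so the paper simply instantiates it with $S = \{1, \ldots, n\}$ (which contains every size-$(n-2t)$ subset) and reads off $\bigcap_{i \in S_1} \arg\min f_i = \arg\min \sum_{i=1}^{n} f_i = \bigcap_{i \in S_2} \arg\min f_i$, with nonemptiness supplied by Assumption~\ref{asp:basic} --- a two-line argument. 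Your one-element-swap argument (anchoring $S_1$, varying the size-$(n-t)$ superset $S'$, and reaching each agent $j$ through a subset $\widehat{S}_j \ni j$) is valid and would be exactly the right tool if the definition quantified only over supersets of size exactly $n-t$; as written it proves the equivalence under that weaker hypothesis, at the cost of the extra $t=0$ versus $t \geq 1$ case split that the paper's shortcut avoids.
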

\begin{proof}
%
Let the true cost functions of each agent $i$ be denoted by $f_i(x)$.\\ 


{\bf Part I:} We first show that the condition in
Definition~\ref{def:2t_red_alt} implies that in Definition~\ref{def:2t_red}. Suppose that the condition stated in
Definition~\ref{def:2t_red_alt} holds true.





Consider two arbitrary sets of agents $S$ and $\widehat{S}$ with $\mnorm{S} \geq n-t$, $\mnorm{\widehat{S}} \geq n-2t$, and $\widehat{S} \subseteq S$. We need to show that~\eqref{def_2} in Definition~\ref{def:2t_red} holds true.\\

Note,~\eqref{def_1} in Definition~\ref{def:2t_red_alt} implies that there exists a point $x^*$ such that
\begin{align*}
    x^* \in \bigcap_{i \in S^\dagger} \arg \min_{x \in \R^d} f_i(x) , \quad \forall \, S^\dagger \subseteq S, ~ \mnorm{S^\dagger} = n-2t.
\end{align*}
Therefore, 
\[x^* \in \bigcap_{i \in S} \arg \min f_i(x) \neq \emptyset.\]
Thus, from Lemma~\ref{lem:non-empty-x},
\begin{align}
    \bigcap_{i \in S} \arg \min ~ f_i(x) = \arg \min \sum_{i \in S} f_i(x)
    \label{e_p_0}
\end{align}
Now, consider an arbitrary subset $S_1 \subseteq \widehat{S}$ with $\mnorm{S_1} = n-2t$. Then, 
\begin{align}
    \bigcap_{i \in S} \arg \min ~ f_i(x) \subseteq \bigcap_{i \in \widehat{S}} \arg \min ~ f_i(x) \subseteq \bigcap_{i \in S_1} \arg \min ~ f_i(x). \label{eqn:alt_s_s1}
\end{align}
We now show that when the condition in Definition~\ref{def:2t_red_alt} holds true then $\bigcap_{i \in S} \arg \min ~ f_i(x) = \bigcap_{i \in S_1} \arg \min ~ f_i(x)$.
The proof is by contradiction.\\

Suppose that 
\begin{align}
    \bigcap_{i \in S} \arg \min ~ f_i(x) \subset \bigcap_{i \in S_1} \arg \min ~ f_i(x).  \label{eqn:alt_cont}
\end{align}
This implies that there exists a point $x^\dagger$ in $\bigcap_{i \in S_1} \arg \min ~ f_i(x)$ such that 
\[x^\dagger \not \in \bigcap_{i \in S} \arg \min ~ f_i(x).\]
This implies that there exists $i^\dagger \in S$ such that $x^\dagger \not \in \arg \min f_{i^\dagger}(x)$. Now, consider a subset  $S_2\subseteq S$ with $|S_2|=n-2t$ and $i^\dagger\in S_2$. Then,
\[x^\dagger \not \in \bigcap_{i \in S_2} \arg \min ~ f_i(x).\]
Since $x^\dagger\in\bigcap_{i \in S_1} \arg \min ~ f_i(x)$, the above implies that 
\[
\bigcap_{i \in S_1} \arg \min ~ f_i(x) \neq \bigcap_{i \in S_2} \arg \min ~ f_i(x)
\]
which contradicts~\eqref{def_1} in Definition~\ref{def:2t_red_alt}. Therefore,~\eqref{eqn:alt_cont} cannot hold, and so,
\begin{align}
    \bigcap_{i \in S} \arg \min ~ f_i(x) = \bigcap_{i \in S_1} \arg \min ~ f_i(x).
    \label{e_p_2}
\end{align}
Substituting the above in~\eqref{eqn:alt_s_s1} implies that
\begin{align*}
    \bigcap_{i \in S} \arg \min ~ f_i(x) = \bigcap_{i \in \widehat{S}} \arg \min ~ f_i(x)
\end{align*}
The above together with~\eqref{e_p_0} imply that
\begin{align}
    \bigcap_{i \in \widehat{S}} \arg \min ~ f_i(x) = \arg \min \sum_{i \in S} f_i(x). \label{eqn:alt_s=s1}
\end{align}

Note that the above argument holds true for all pairs of sets $\widehat{S}, \, S$ with $\mnorm{\widehat{S}} \geq n-2t$. $\mnorm{S} \geq n-t$, and $\widehat{S} \subseteq S$. Therefore, the above implies that the condition stated in Definition~\ref{def:2t_red} is true.\\

{\bf Part II:}
We now show that the condition in Definition \ref{def:2t_red} implies the condition in Definition \ref{def:2t_red_alt}. Suppose that the condition stated in Definition~\ref{def:2t_red} holds true.\\

From Assumption~\ref{asp:basic}, 
\[\arg \min \sum_{i  = 1}^n f_i(x) \neq \emptyset.\]
From substituting $S = \{1, \ldots, \, n\}$ in the equation~\eqref{def_2} in Definition~\ref{def:2t_red}, we trivially obtain the following for every two subsets of agents $S_1$ and $S_2$ each containing $n-2t$ agents.
\[\emptyset \neq \bigcap_{i \in S_1} \arg \min f_i(x) = \bigcap_{i \in S_2} \arg \min f_i(x).\]
Hence, the condition in Definition~\ref{def:2t_red} implies the condition in Definition~\ref{def:2t_red_alt}. 
\end{proof}

\end{document}